\pgfplotsset{plot coordinates/math parser=false}
\newtheorem{lemma}{Lemma}
\newtheorem{theorem}{Theorem}
\newtheorem{definition}{Definition}
\newtheorem{proposition}{Proposition}
\newtheorem{remark}{Remark}
\newtheorem{assumption}{Assumption}
\setlist[itemize]{leftmargin=*}
\begin{document}
	
	\title{Real-time Sampling and Estimation on Random Access Channels: Age of Information and Beyond}
	\date{}
	\author{Xingran Chen,\IEEEmembership{}
		Xinyu Liao, \IEEEmembership{} 
		Shirin Saeedi Bidokhti \IEEEmembership{}
		
		\IEEEcompsocitemizethanks {\IEEEcompsocthanksitem Xingran Chen and Shirin Saeedi Bidokhti are with the Department of Electrical and System Engineering,  University of Pennsylvania, PA, 19104.\quad 
			E-mail: \{xingranc, saeedi\}@seas.upenn.edu.
			\IEEEcompsocthanksitem Xinyu Liao is with the Graduate Group of Applied Mathematics and Computational Science, University of Pennsylvania, PA, 19104.\quad 
			E-mail: xinyul@sas.upenn.edu.
		}
		\thanks{This work was supported by Grants NSF 1910594 and NSF 1850356.}
		\thanks{Parts of the work have been accepted by 2021 IEEE INFOCOM.}
	}
	\maketitle
	\begin{abstract}
		Next generation  multiple access channels require to provision for  unprecedented massive user access  in a plethora of  applications in cyber-physical systems. This work proposes decentralized policies for the real-time monitoring and estimation of  autoregressive processes over random access channels. Two classes of policies are investigated: (i) oblivious schemes in which  sampling and transmission policies are independent of the  processes that are monitored, and (ii) non-oblivious schemes in which transmitters causally observe their corresponding processes for decision making. In the class of oblivious policies, we show that minimizing  the expected time-average estimation error is equivalent to minimizing the expected age of information. Consequently, we prove lower and upper bounds on the minimum achievable estimation error in this class. Next, we consider non-oblivious policies and design a threshold policy, called error-based thinning, in which each transmitter node becomes active  if its instantaneous error has crossed a fixed threshold (which we optimize). Active nodes then transmit stochastically following a slotted ALOHA policy. A closed-form, approximately optimal, solution is found for the threshold as well as the resulting estimation error.  It is shown that non-oblivious policies offer a multiplicative gain close to $3$ compared to oblivious policies. Moreover, it is shown that oblivious policies that use  age of information for decision making improve the state-of-the-art at least by the multiplicative factor $2$. The performance of all discussed policies is compared using simulations. Numerical comparison shows that the performance of the proposed decentralized policy is very close to that of centralized greedy scheduling.  Finally, we extend our framework to unreliable random access channels. Simulations show that the multiplicative gain offered by non-oblivious policies (compared to oblivious policies) is independent of the channel erasure probability.
	\end{abstract}
	\begin{IEEEkeywords}
		Remote Estimation, Age of Information, Sampling, Decentralized Systems, Random Access, Collision Channel, Slotted ALOHA.
	\end{IEEEkeywords}
	
	\section{Introduction}\label{sec:intro}
	\subsection{Motivation}
	The Internet of Things (IoT) paradigm is changing our
	conception of communications.
	In the past decades, research has focused on various  technologies to improve connectivity, rate, reliability, and/or latency  \cite{SAA2015, LYQZDZ2020, PPLLY2019, VBPSSSRM2019, AASEAA2019,  LCNZYCFYGW2021,PLJX2020, XZWC2019,JC2016,TWASAVWH2013, DCRHGHY2016,LHEUMJ2019,TPGLWSMB2021, JDZYXX2009, SRFAKSSQZA2020,LLBSYWDHKT2021, KWZDDKCSGKK2021, SRFAKSSQZA2017,JZTLRPLXSYJGNCB2019, JLLJCH2008,CXJZWNXSRPLTLJW2019,A.Kosta-2019,8807257,I.Kadota-2019-1,A.M.Bedewy-2019,I.Kadota-2018,I.Kadota-2019}.
	But are such metrics representative of optimal multiple access system designs for future IoT and cyber-physical systems (CPS) applications or do we need to go beyond these metrics? 
	Expanding on this question, it is important to note that in traditional designs, it is often assumed that information bits (or packets) are processed and stored at the sources, waiting to be reliably transmitted and replicated at the receiver node(s) with high rate and low latency. 
	However, in many IoT applications, these assumptions are no longer realistic. Oftentimes,
	information is to be collected and communicated real-time. In such settings, rate, reliability and latency may not directly be relevant.

	In this paper, we consider the  problem of {real-time sampling and estimation} over a random access channel with $M$ Markov (autoregressive) physical processes (sources).  These processes are to be observed, sampled, and communicated wirelessly  with a fusion center for \emph{timely estimation}. Considering applications in IoT and CPS, it is not realistic to assume
	a central scheduler that monitors all the sensors for decision making; we therefore seek to design  near optimal decentralized sampling and communication strategies.

	Towards understanding the problem of real-time sampling and estimation in the above setting, recent works \cite{RTEM2019, SacBacinogluUysal-BiyikogluDurisi2018, cxritw2019, cxr2019, YSun-2017} have  proposed centralized and decentralized multiple access schemes to minimize the metric of Age of Information (AoI) \cite{S.Kaul2011-1, S.Kaul2011-2}.
	Nonetheless, it has remained open how such designs can inform near-optimal designs when the desired metric of performance is the real-time estimation error (rather than AoI as a proxy). 
	The present work establishes this bridge and goes beyond AoI minimization.

	\subsection{Related Work}
	Below, we discuss three major facets of the problem.
	
	{\bf{Sampling}}: 
	Remote estimation of physical processes requires efficient sampling and communication strategies that minimize not only the estimation error cost but also the sampling and transmission costs. With this viewpoint, prior works have studied optimal sampling strategies and their structural properties for various point-to-point scenarios. \cite{ImerBasar10} designs optimal sampling strategies with limited measurements. \cite{RabiMoustakidesBaras12} studies the problem for continuous sources.  \cite{LipsaMartins11} proves the joint optimality of symmetric thresholding policies and Kalman-like estimators for  autoregressive Markov processes. \cite{AMSH2017} formulates  a two-player team problem and designs efficient iterative algorithms. Systems with energy harvesting sensors are considered in \cite{Nayyar-Basar-Teneketzis-Veeravalli12}. Noisy channels and packet drop channels are considered in \cite{ChakravortyMahajan2020, GaoAkyolBasar2018}.  The above-mentioned works have all considered single-user channels and the developed methodologies do not generalize to random access networks with multiple sensors.
	
	{\bf{Reliable v.s. Timely  communication}}: 
	In estimation and control applications, timeliness of communication is key and that is why traditional rate-distortion frameworks and channel coding paradigms that propose asymptotic block coding solutions are not applicable. More importantly, it is oftentimes observed that as the  rate and/or reliability of compression/communication schemes improve, their timeliness decrease. This aspect of sampling and remote estimation is  barely  studied in the estimation literature. One of the few existing works in this direction is \cite{KH2019} which proposes and optimizes a hybrid automatic repeat request (HARQ)-based remote estimation protocol and  improves the performance of the remote estimation systems compared to  conventional non-HARQ policies.  Recently, tradeoffs between reliability/rate and the timeliness of communication have been looked at in the context of age of information (AoI) -- a metric of timeliness defined in \cite{S.Kaul2011-1}. In channels with queue constraints, \cite{RTEM2019} establishes a tradeoff between AoI and rate. \cite{SacBacinogluUysal-BiyikogluDurisi2018} finds the optimal blocklength of channel coding for minimizing AoI.  
	\cite{cxritw2019} provides a  centralized scheduling framework to attain  tradeoffs between rate and AoI in broadcast channels. \cite{cxr2019} proposes decentralized transmission strategies for random access channels that benefit from the availability of fresh packets and improve both communication rate and AoI.  It is known that AoI is closely related to the expected estimation error of schemes that are oblivious to the processes they monitor \cite{YSun-2017}. Non-oblivious sampling schemes are, however, signal-dependent and known to outperform oblivious schemes. In \cite{YSun-2017}, threshold policies are shown to be optimal for point-to-point channels with a random delay and closed form solutions are found for the optimal threshold value. It is further shown that the oblivious policies can be far from optimal. We build on our prior work in \cite{cxr2019} that concerned AoI minimization and propose decentralized threshold policies for minimizing estimation error in random access channels with many users.
	
	{\bf{Distributed decision making}}: 
	In random access networks, a large number of sensors communicate with a single fusion center over a wireless channel. To avoid collision, most works in this direction have considered centralized oblivious policies  that do not observe the process realizations for decision making (see, e.g., \cite{Meier,Oshman94,Logothetis99,Gupta06,Vitus10,Zhao14,STJSS2015} and the references therein).  
	In   IoT applications, however, it is not realistic to assume a central scheduler that monitors all the sensors for decision making. We seek decentralized solutions in which each sensor decides when to sample and transmit  information based only on its local observations.
	In decentralized setups (and in the context of control, rather than estimation) \cite{KG2015}, \cite{KG2015-2} consider
	wireless control architectures with multiple control loops over a random access channel and optimize the access rate of the sensors who randomly communicate. Policies that adapt to the state of the systems are proposed in \cite{GT2012}.  The work \cite{XZMMVWCUM2021} (which was carried out concurrently and independently) designs decentralized  policies for the remote estimation of i.i.d processes over a collision channel. Decision making in both \cite{GT2012} and \cite{XZMMVWCUM2021} is thresholding and based on the realization of the process (or a function of that). But since neither of the two works exploits channel collision feedback, adaptations of them (or other policies that impose fixed transmission probabilities  on the channel) are far from optimal in our setup.

	\subsection{Contributions}
	We study sampling and remote estimation of $M$ independent random walk processes over a \emph{wireless collision channel}. As opposed to all prior works, we seek decentralized solutions in which  decision at each node is based solely on its local observations and channel collision feedback. Our goal is to minimize the estimation error, and specifically a  normalized metric that we call  the normalized average estimation errors (NAEE). This metric looks at the expected time-average estimation error, normalized by the number of source nodes $M$. We are interested in the asymptotic regime where $M\to\infty$.
	
	Two general classes of policies are considered, namely oblivious policies and non-oblivious policies. In the former class, decision making is independent of the processes that are monitored and we prove that minimizing the expected time-average estimation error, in the class of oblivious policies, is equivalent to minimizing the age of information. This leads to lower and upper bounds on the minimum  achievable estimation error in this class along with efficient oblivious policies that are age-based. In particular, the NAEE under age-based policies  is lower bounded by $.88\sigma^2$ and upper bounded by $\frac{e}{2}\sigma^2$.
	
	We next ask if non-oblivious policies can provide a significant gain by observing the processes as they progress. Since all source nodes are provided with the channel collision feedback, they can compute their age-function and reproduce their respective estimated processes (at the destination) in each time slot. Furthermore, using the collision feedback, the nodes can implicitly coordinate for communication. We  define the notion of  {\it error process} at each node which is a function of the sample values and age. We then propose  a threshold policy, called error-based thinning, in which source nodes become active only when their corresponding error process is beyond a given threshold. Once a node becomes active, it transmits stochastically following a slotted ALOHA policy.

	To find an optimal threshold and find a closed-form solution for the resulting NAEE, we first provide a closed-form expression for the  NAEE that is a function of the peak age, the transmission delay,  a term which we call the silence delay, as well as the process realization. We approximately find the NAEE under an optimal threshold policy by considering the underlying autoregressive Markov process as a discretized Wiener process.  An optimal threshold is then shown to be approximately $\sigma\sqrt{eM}$ and the resulting NAEE to be $\frac{e}{6}\sigma^2$. The approximation error increases linearly as a function of the variance of the innovation process and decreases as $M$ gets large. 
	
	Next, we extend our framework to unreliable random access channels. The closed-form of NAEE under oblivious and non-oblivious policies are provided. The multiplicative gain is the same as that in reliable random access channels, which equals to $3$, and is independent of the channel erasure probability $\epsilon$. Under non-oblivious policies, an optimal threshold is approximately $\sigma\sqrt{eM/(1-\epsilon)}$ and the corresponding NAEE is $\frac{e}{6(1-\epsilon)}\sigma^2$.
	
	Simulation results show that the proposed decentralized  threshold policy outperforms  oblivious  policies. Moreover, oblivious policies are shown to outperform all state-of-the-art policies (both oblivious and non-oblivious) that impose a fixed rate (without using the collision feedback). Finally, it is numerically shown that the performance of the optimal threshold policy is  very close to that of centralized greedy policies that schedule transmissions according to the instantaneous error reduction or age reduction.

	The paper is organized as follows. In Section \ref{sec:systemModel}, we introduce the system model. Oblivious policies are studied in Section~\ref{sec: Signal-independent policies} and non-oblivious policies are discussed in  Section~\ref{sec: signal-dependent}. The framework is extended to unreliable random access channels in Section~\ref{sec: generalization}. Simulation results are presented for various policies  in Section~\ref{sec: numerical results} and our assumptions and derivations are verified numerically. Finally, we conclude in Section~\ref{sec: Conclusion and Future Work}.

	\subsection{Notation}
	We use the notations $\mathbb{E}[\cdot]$ and $\Pr(\cdot)$ for expectation and probability, respectively. Scalars are denoted by  lower case letters, e.g. $s$, and random variables are denoted by capital letters, e.g. $S$. The notation $A\sim B$ implies that $A$ has the same distribution as $B$ and $\mathcal{N}(0, \sigma^2)$ stands for  the Gaussian distribution with mean $0$ and variance $\sigma^2$. The notations $O(\cdot)$ and $o(\cdot)$ represent the Big O and little o notations according to Bachmann-Landau notation, respectively.

	\section{System Model}
	\label{sec:systemModel}
	Consider a system with $M$ statistically identical sensors and a fusion center. We often refer to the sensor nodes as nodes or transmitters and the fusion center as the receiver/destination. For analysis tractability, we focus on symmetric cases where all nodes have similar configuration. Let time be slotted. Each node $i,\ i=1,2,\cdots, M$, observes a process $\{X_i(k)\}_{k\geq0}$ which is a random walk process as follows 
	\begin{align}\label{eq: Wiener process}
		X_i(k+1)=X_i(k)+W_i(k)
	\end{align}
	where $W_i(k)\sim\mathcal{N}(0,\sigma^2)$.  The processes $\{X_i(k)\}_{k=0}^{\infty}$ are assumed to be mutually independent across $i$ and for simplicity we let $X_i(0)=0$.
	
	At the beginning of each time slot, the nodes have the capability to sample the underlying process and decide whether or not to communicate the sample with the receiver. The communication medium is modeled  by a collision channel: If two or more nodes transmit  in the same time slot, then the packets interfere with each other (collide) and do not get delivered at the receiver. We use the binary variable $d_i(k)$  to indicate whether a packet is transmitted from node $i$ and delivered at the receiver in time slot $k$. Specifically, $d_i(k)=1$ if node $i$ delivers a packet successfully; $d_i(k)=0$ otherwise.
	We assume a delay of one time unit in delivery for packets. At the end of time slot $k$, all transmitters are informed (through a low-rate feedback link) whether or not collision occurred, which is indicated by an indicator $c(k)$. If collisions happen in time slot $k$, then $c(k)=1$; if a packet is delivered successfully at the receiver or no packet is transmitted, then $c(k)=0$.  Note that if $c(k)=1$, then $d_i(k)=0$; if $c(k)=0$, then $d_i(k)=0$ when node $i$ does not transmit, and $d_i(k)=1$ when node $i$ transmits a packet. 
	
	We assume that the buffer size of every transmitter is one packet and that new packets replace older undelivered  packets at the transmitter. This assumption relies on the fact that the underlying processes that are monitored are Markovian.

	The receiver estimates the process in every time slot based on the  collection of the received samples. Denote by $\hat{X}_i(k)$ the estimate of $X_i(k)$ in time slot~$k$.
	We define the following normalized average sum of estimation errors (NAEE) as our performance metric:
	\begin{equation}\label{eq: E-MMSE}
		\begin{aligned}
		L^{\pi}(M)=\lim_{K\rightarrow\infty}\mathbb{E}[L_K^{\pi}],\quad L_K^{\pi}(M)=\frac{1}{M^2}\sum_{i=1}^{M}\frac{1}{K}\sum_{k=1}^{K}\big(X_i(k)-\hat{X}_i(k)\big)^2
		\end{aligned}
	\end{equation}
	where $M$ is the number of sources, $\pi\in\Pi$ refers to the sampling and transmission policy in place, and $\Pi$ is the set of all decentralized sampling and transmission policies.  Note that the metric \eqref{eq: E-MMSE} is normalized by $M$. This allows us to study the asymptotic performance in the regime of large $M$.
	The minimum attainable NAEE is then denoted by $L(M)$:
	\begin{align}\label{eq:Lgeneral}
		L(M)=\min_{\pi\in\Pi}L^{\pi}(M).
	\end{align}
	Our objective is to design \emph{decentralized} sampling and transmission mechanisms  to attain $L(M)$.  
	
	Consider the $i^{th}$ node. Let $\{k_\ell^{(i)}\}_{\ell\geq 0}$ be the sequence of time slots at the end of which packets are received at the destination from node $i$. In any time slot $k$, $k_{\ell-1}^{(i)}<k\leq k_\ell^{(i)}$, the latest sample from node $i$ is received at the end of $k_{\ell-1}^{(i)}$ and since collisions may happen, then it is time stamped at the beginning of time $k'$ with $k'\leq k_{\ell-1}^{(i)}$.  So the age of information (AoI) \cite{cxr2019} with respect to node $i$, denoted by $h_i(k)$, is 
	\begin{align}\label{eq: h_i}
		h_i(k)=k-k_{\ell-1}^{(i)}.
	\end{align}
	Without loss of generality, assume $k_0^{(i)}=0$.
	At the beginning of time slot $k$, the receiver knows the information of all packets delivered before time $k$, i.e., $\{X_i(k_t^{(i)})\}_{t=0}^{\ell-1}$
	and reconstructs $\hat{X}_i(k)$ by the minimum mean square error (MMSE) estimator:
	\begin{align*}
		\hat{X}_i(k)=&\mathbb{E}\left[X_i(k)|\big\{X_i\big(k_t^{(i)}\big)\big\}_{t=0}^{\ell-1}\right].
	\end{align*}
	For the class of policies that we consider in this paper (oblivious policies and symmetric thresholding policies), the MMSE estimator reduces to a Kalman-like  estimator:
	\begin{equation}\label{eq: MMSE}
		\begin{aligned}
			\hat{X}_i(k)=&\mathbb{E}[X_i(k)|X_i(k_{\ell-1}^{(i)})]=X_i(k_{\ell-1}^{(i)}).
		\end{aligned}
	\end{equation}

	One of the major challenges in this problem arises from the decentralized nature of decision making. A decentralized policy is one in which the action of each node is only a function of its own local observations and actions. In this setup,  the action of node $i$ at time $k$ depends on the history of feedback and actions as well as casual observations of the process $\{{X}_i(j)\}_{j\leq k}$.
	
	We also consider a simpler class of policies $\Pi^\prime$, called \emph{oblivious} policies, in which the action of each node depends only on the history of feedback and actions at that node. In particular, oblivious policies do not take into account the realization (value) of the samples, but only the time they were sampled, transmitted, and received (if successfully received). 
	We denote the minimum attainable NAEE in the class of oblivious policies by 
	\begin{align}\label{eq:Lgeneral1}
		\bar{L}(M)=\min_{\pi\in\Pi^\prime}L^{\pi}(M).
	\end{align}
	We argue in section~\ref{sec: Signal-independent policies} that this simplification equivalently transforms the estimation problem  into the problem of timely communication of packets for age minimization. By additionally exploiting the  value of the samples, in Section \ref{sec: signal-dependent}, we design and analyze decentralized mechanisms that outperform oblivious schemes in minimizing the expected average estimation error.

\section{Oblivious Policies and Age of Information}\label{sec: Signal-independent policies}

Oblivious policies are independent of the processes they observe and they are therefore less costly to implement. Moreover, they can still benefit from the channel collision feedback to (i) quantify how stale the information at the receiver has become (in order to decide when to sample and communicate) and (ii) adapt to the channel state (for communication purposes). In this section, we show that minimizing  NAEE in the class of oblivious policies is equivalent to minimizing the normalized average sum of AoI (NAAoI) as we have previously defined in \cite{cxr2019}.

First, we establish the following relationship between  the expected estimation error and the expected age.
\begin{lemma}\label{lem: Gaussian distribution}
	In oblivious policies, the expected  estimation error associated with process $i$ has the following relationship with the expected age function:
	\begin{align}\label{eq: Gaussian distribution}
		\mathbb{E}[\big(X_i(k)-\hat{X}_i(k)\big)^2]=\mathbb{E}[h_i(k)]\sigma^2.
	\end{align}
\end{lemma}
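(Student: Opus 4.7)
The plan is to express the estimation error as a sum of innovation terms over the interval since the last successful delivery, and then exploit the defining property of oblivious policies, namely that sampling/transmission decisions are statistically independent of the underlying process.

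First I would use the MMSE form stated in \eqref{eq: MMSE}, namely $\hat{X}_i(k) = X_i(k_{\ell-1}^{(i)})$, combined with the random-walk recursion \eqref{eq: Wiener process}, to write
\begin{align*}
X_i(k) - \hat{X}_i(k) = X_i(k) - X_i(k_{\ell-1}^{(i)}) = \sum_{j=k_{\ell-1}^{(i)}}^{k-1} W_i(j),
\end{align*}
which is a sum of exactly $h_i(k) = k - k_{\ell-1}^{(i)}$ i.i.d.\ $\mathcal{N}(0,\sigma^2)$ innovations.

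Next I would condition on $h_i(k) = h$. Here is the crucial structural observation: in an oblivious policy, the random time $k_{\ell-1}^{(i)}$ (and hence $h_i(k)$) is a functional only of the policy's actions, the collision feedback $\{c(j)\}_{j<k}$, and exogenous randomization — none of which depends on the realizations of $\{W_i(j)\}$. Therefore $h_i(k)$ is independent of the innovation sequence. Conditioning on $h_i(k)=h$ then leaves $\sum_{j=k-h}^{k-1} W_i(j) \sim \mathcal{N}(0,h\sigma^2)$, so
\begin{align*}
\mathbb{E}\big[\big(X_i(k)-\hat{X}_i(k)\big)^2 \,\big|\, h_i(k)=h\big] = h\sigma^2.
\end{align*}

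Finally I would take an outer expectation over $h_i(k)$ via the tower property to obtain $\mathbb{E}[(X_i(k)-\hat{X}_i(k))^2] = \mathbb{E}[h_i(k)]\sigma^2$, which is \eqref{eq: Gaussian distribution}. The only non-routine step is the independence argument in the middle paragraph; it is what distinguishes oblivious policies from the non-oblivious setting of Section~\ref{sec: signal-dependent}, where $h_i(k)$ will in general be correlated with the innovations and the identity fails.
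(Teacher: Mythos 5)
Your proposal is correct and follows essentially the same route as the paper's proof: decompose the error into a sum of $h_i(k)$ i.i.d.\ innovations, observe that under oblivious policies $h_i(k)$ is independent of $\{W_i(j)\}_j$, and conclude that the second moment is $\mathbb{E}[h_i(k)]\sigma^2$. The only cosmetic difference is that the paper invokes Wald's equality at the last step, whereas you prove that step directly by conditioning on $h_i(k)$ and applying the tower property — the two are equivalent.
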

\begin{proof}
	At the beginning of time slot $k$, the estimation error is
\begin{align*}
X_i(k)-\hat{X}_i(k)=X_i(k)-X_i(k_{\ell-1}^{(i)})=\sum_{l=1}^{k-k_{\ell-1}^{(i)}}W_i\big(l+k_{\ell-1}^{(i)}\big).
	\end{align*}
	By the stationarity of $\{W_i(k)\}_{k=1}^{\infty}$ and using \eqref{eq: h_i}, we conclude
	\begin{align*}
		X_i(k)-\hat{X}_i(k)\sim\sum_{l=1}^{h_i(k)}W_i(l).
	\end{align*}
	Now note that $h_i(k)$ is independent of $\{W_i(k)\}_{k=1}^{\infty}$ under oblivious policies. Therefore, using Wald's equality, we find
	\begin{align*}
		&\mathbb{E}[X_i(k)-\hat{X}_i(k)]=0\\
		&\mathbb{E}[\big(X_i(k)-\hat{X}_i(k)\big)^2]=\mathbb{E}[h_i(k)]\sigma^2.
	\end{align*}
\end{proof}
\begin{remark}
	Lemma \ref{lem: Gaussian distribution} does not hold for  non-oblivious policies. As a matter of fact, finding $\mathbb{E}[\big(X_i(k)-\hat{X}_i(k)\big)^2]$ in closed-form is non-trivial and its numerical computation can be intractable when $M$ is large. The reason is that even though the estimation error is the sum of  $h_i(k)$ Gaussian noise variables,  once we condition on $h_i(k)$, their distributions change because $h_i(k)$ can be dependent on the process that is being monitored.
\end{remark} 
Lemma~\ref{lem: Gaussian distribution} is reminiscent of \cite[Lemma~4]{TZOYS2019}.
Using Lemma~\ref{lem: Gaussian distribution}, the metric NAEE in \eqref{eq: E-MMSE} can be re-written as follows:
\begin{align}\label{eq: obliviousL}
	L^\pi(M)=\lim_{K\rightarrow\infty}\sigma^2 J^\pi(M)
\end{align}
where
\begin{align}\label{eq: obliviousJ}
	J^\pi(M)=\frac{1}{M^2}\sum_{i=1}^{M}\frac{1}{K}\sum_{k=1}^{K}\mathbb{E}[h^\pi_i(k)].
\end{align}
Note that $J^\pi(M)$ is only a function of the age function $h^\pi_i(k)$. The metric in \eqref{eq: obliviousJ} is the NAAoI defined in \cite{cxr2019} and, therefore,
the decentralized threshold policies of \cite{cxr2019} apply directly. Note that the generation rate of packets for every sensor is $\theta\in(0,1]$ in \cite{cxr2019}, while $\theta$ should be set to $1$ in the model defined in Section~\ref{sec:systemModel}. This is because  we assume that  sensor $i$ can observe the process $\{X_i(k)\}_k$ for every $k$.
In particular,   \cite[Algorithm 2]{cxr2019}  outlines a  stationary age-based thinning (SAT) policy  in which a source transmits only when the corresponding AoI is larger than a pre-determined threshold. Using this algorithm, it was shown that the following age performance can be achieved in the limit of large $M$:
\begin{align} 
	&\lim_{M\to\infty}J^{\text{SAT}}(M)=\frac{e}{2}\label{eq: upperboundJ}\\
	&\lim_{M\to\infty}L^{\text{SAT}}(M)=\frac{e}{2}\sigma^2\label{eq: upperboundL}.
\end{align}
Results from \cite[Proposition 1]{cxr2019} also lead to the following lower bound on  NAAoI $J^\pi(M)$ for any decentralized policy $\pi$:
\begin{align}
	\label{eq: lowerboundJ}
	\lim_{M\to\infty}J^\pi(M)\geq .88.
\end{align}
Using \eqref{eq: upperboundL} and \eqref{eq: lowerboundJ}, we arrive at the following proposition.
\begin{proposition}\label{pro: oblivious}
	The minimum attainable NAEE in the class of oblivious policies is characterized by the following bounds
	\begin{align}\label{eq: SAT}
		.88\sigma^2\leq \lim_{M\rightarrow\infty}\bar{L}(M)\leq\frac{e}{2}\sigma^2. 
	\end{align}
\end{proposition}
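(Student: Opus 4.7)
The plan is to reduce the estimation-error question to an age-of-information question via Lemma~\ref{lem: Gaussian distribution}, and then quote the two sided bounds on NAAoI that have already been established for decentralized random-access policies. Specifically, for any oblivious policy $\pi\in\Pi'$, the age sequence $\{h_i^\pi(k)\}_k$ is independent of the innovation sequence $\{W_i(k)\}_k$, which is precisely the hypothesis used in Lemma~\ref{lem: Gaussian distribution}; therefore I would first substitute \eqref{eq: Gaussian distribution} inside the definition \eqref{eq: E-MMSE} of $L^\pi(M)$, interchange the expectation with the finite double sum, and conclude the identity $L^\pi(M)=\sigma^2\lim_{K\to\infty} J^\pi(M)$ as in \eqref{eq: obliviousL}--\eqref{eq: obliviousJ}. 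Taking the infimum over $\pi\in\Pi'$ on both sides then yields $\bar{L}(M)=\sigma^2\inf_{\pi\in\Pi'} \lim_{K\to\infty} J^\pi(M)$, so the estimation problem is now purely an age problem.

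For the upper bound, I would invoke the stationary age-based thinning policy (SAT) of \cite[Algorithm~2]{cxr2019}, which is an oblivious policy and hence a valid element of $\Pi'$. By \eqref{eq: upperboundJ}, its NAAoI satisfies $\lim_{M\to\infty} J^{\text{SAT}}(M)=e/2$, so the identity above immediately gives $\lim_{M\to\infty} L^{\text{SAT}}(M)=(e/2)\sigma^2$, yielding the upper bound on $\bar{L}(M)$. Before quoting this result I would briefly remark that the parameter $\theta$ of \cite{cxr2019}, which denotes the per-slot packet generation rate, should be set to $1$ in the present model because each sensor observes its own process in every slot, so the SAT guarantees of \cite{cxr2019} apply verbatim.

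For the lower bound, I would appeal to \cite[Proposition~1]{cxr2019}, which asserts that no decentralized random-access policy (oblivious or not) can achieve an asymptotic NAAoI below $.88$; restricting that lower bound to $\Pi'\subseteq\Pi$ gives $\lim_{M\to\infty}\inf_{\pi\in\Pi'} J^\pi(M)\geq .88$, and multiplying by $\sigma^2$ and using the identity from the first paragraph produces $\lim_{M\to\infty}\bar{L}(M)\geq .88\,\sigma^2$. Combining the two inequalities yields \eqref{eq: SAT}.

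The genuinely non-trivial step is the reduction in the first paragraph, which already sits in Lemma~\ref{lem: Gaussian distribution}; the remainder of the argument is a bookkeeping exercise that transports results from \cite{cxr2019} across the $\sigma^2$-factor. The main subtlety to be careful about is ordering of limits: since the upper bound in \eqref{eq: upperboundJ} is stated for the policy SAT after the limit in $K$ has been taken, one should verify that $\lim_{K\to\infty}$ and $\inf_{\pi\in\Pi'}$ can be swapped on the way to $\bar{L}(M)$, or equivalently simply exhibit $\pi=\text{SAT}$ as an upper bound and quote the lower bound for arbitrary $\pi$, avoiding any swap. No novel probabilistic estimate is needed beyond what Lemma~\ref{lem: Gaussian distribution} and \cite{cxr2019} already supply.
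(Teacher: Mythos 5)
Your proposal is correct and follows essentially the same route as the paper: reduce NAEE to NAAoI via Lemma~\ref{lem: Gaussian distribution} (yielding \eqref{eq: obliviousL}--\eqref{eq: obliviousJ}), obtain the upper bound by exhibiting the SAT policy of \cite{cxr2019} with $\lim_{M\to\infty}J^{\text{SAT}}(M)=e/2$, and obtain the lower bound from \cite[Proposition~1]{cxr2019} restricted to $\Pi'$. Your added care about setting $\theta=1$ and about avoiding a limit--infimum swap matches the paper's (terser) treatment and introduces no discrepancy.
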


\subsection{Comparison with Oblivious Centralized Policies}
In this section, we compare the SAT policy in \cite[Algorithm 2]{cxr2019} with an oblivious centralized policy -- the Max-Weight (MW) policy \cite{cxritw2019, cxr2019, I.Kadota-2018, I.Kadota-2019-1, IKEM2019}. 
Denote $\underline{T}^{(i)}(k)=\{k_j^{(i)}\}_{j=0}^{\ell}$ with $k_\ell^{(i)}\leq k$. We devise the MW policy using techniques from Lyapunov Optimization.  Define the Lyapunov function
\begin{align}\label{eq: Lyapunov}
L(k)=\frac{1}{M}\sum_{i=1}^{M}\big(X_i(k)-\hat{X}_i(k)\big)^2
\end{align}
and the one-slot Lyapunov Drift
\begin{align}\label{eq: L Drift}
LD(k)=\mathbb{E}[L(k+1)-L(k)|\underline{T}^{(i)}(k)].
\end{align}
We devise the MW policy such that it minimizes the one-slot Lyapunov Drift. 
\begin{definition}\label{def: MW policy}
	At the beginning of each slot $k$, the MW policy chooses the action $i^*$ such that
	\begin{align}
		h_{i^*}(k)=\max_i h_i(k).
	\end{align}
\end{definition}
Note that this  policy is exactly the MW policy derived in \cite{IKEM2019} for age minimization. From Lemma~2 in \cite[Section~III]{I.Kadota-2018}, the policy defined in Definition~\ref{def: MW policy} is optimal. 
\begin{proposition}\label{thm: MW policy}
	The MW policy in Definition~\ref{def: MW policy} minimizes the one-slot Lyapunov Drift in each slot, and
	\begin{align}\label{eq: MW error}
		\lim_{M\rightarrow\infty}L^{MW}(M)=\frac{\sigma^2}{2}.
	\end{align}
\end{proposition}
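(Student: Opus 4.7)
The plan is to split the argument into two parts: first, a direct algebraic expansion of the one-slot Lyapunov drift that shows $LD(k)$ is affine decreasing in $h_{i^*}(k)$ and hence minimized by the max-age node; second, a steady-state calculation of $L^{MW}(M)$ exploiting the oblivious nature of MW together with Lemma~\ref{lem: Gaussian distribution}.

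For the drift, I would expand $L(k+1)-L(k)$ by treating the scheduled node $i^*$ separately from the remaining nodes. For $i=i^*$, the estimator refreshes to $X_{i^*}(k)$ (since delivery happens with a one-slot delay), so that $X_{i^*}(k+1)-\hat{X}_{i^*}(k+1)=W_{i^*}(k)$; for $i\neq i^*$, $\hat{X}_i(k+1)=\hat{X}_i(k)$ and the increment of the squared error is $2W_i(k)(X_i(k)-\hat{X}_i(k))+W_i(k)^2$. Taking the conditional expectation given $\underline{T}^{(i)}(k)$, the cross term for the unscheduled nodes vanishes because $W_i(k)$ is independent of the past innovations and of the (age-based, hence oblivious) scheduler's decisions. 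Lemma~\ref{lem: Gaussian distribution} then gives $\mathbb{E}[(X_{i^*}(k)-\hat{X}_{i^*}(k))^2\mid\underline{T}^{(i^*)}(k)]=h_{i^*}(k)\sigma^2$, and combining these contributions yields
\begin{align*}
LD(k)=\sigma^2\left(1-\frac{h_{i^*}(k)}{M}\right),
\end{align*}
which is minimized precisely by $i^*=\arg\max_i h_i(k)$.

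For the asymptotic NAEE, I would observe that MW depends only on the age vector and is therefore an oblivious policy, so by \eqref{eq: obliviousL}--\eqref{eq: obliviousJ} we have $L^{MW}(M)=\sigma^2 J^{MW}(M)$ in the limit. Under MW the age profile $\{h_i(k)\}_i$ settles into a round-robin steady state: any configuration that is a permutation of $\{1,2,\ldots,M\}$ is self-reproducing, since the unique maximum $M$ resets to $1$ while the other $M-1$ entries increment by one. Hence in steady state $\sum_{i=1}^{M}h_i(k)=M(M+1)/2$ for every $k$, giving $J^{MW}(M)=(M+1)/(2M)\to 1/2$ and therefore $\lim_{M\to\infty}L^{MW}(M)=\sigma^2/2$.

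The main obstacle is not the drift expansion itself (which is routine bookkeeping once Lemma~\ref{lem: Gaussian distribution} is invoked) but rather justifying that the transient of the age process does not affect the Ces\`aro time-average that defines $J^{MW}(M)$. I would handle this by noting that each age increases by one per slot while exactly one age is reset, so from any initial configuration the profile enters the permutation-of-$\{1,\ldots,M\}$ invariant set within $O(M)$ slots and remains there; the transient contribution therefore vanishes as $K\to\infty$, and the steady-state computation suffices.
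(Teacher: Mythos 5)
Your proposal is correct and follows essentially the same route as the paper's proof: the drift is shown to equal $\sigma^2\bigl(1-\tfrac{h_{i^*}(k)}{M}\bigr)$ (the paper obtains the identical expression $\tfrac{\sigma^2}{M}\sum_i(1-h_i(k)d_i(k))$ via the age recursion $h_i(k+1)=d_i(k)+(1-d_i(k))(h_i(k)+1)$ rather than by expanding the squared errors directly), and the limit is computed from the round-robin steady state with $\sum_i h_i(k)=M(M+1)/2$. The only cosmetic difference is that you argue the round-robin/invariant-set structure and the vanishing transient directly, whereas the paper cites Lemma~2 of \cite{I.Kadota-2018} for the equivalence to Round-Robin; your version is slightly more self-contained but not a different proof.
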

\begin{proof}
	The proof of Proposition~\ref{thm: MW policy} is given in Appendix~\ref{App: Proof MW}.
\end{proof}
Comparing \eqref{eq: upperboundL} with \eqref{eq: MW error}, we have  $$\lim_{M\rightarrow\infty}\frac{L^{SAT}(M)}{L^{MW}(M)}=e.$$
The NAEE of the decentralized SAT policy is $e$ times that of the optimal centralized policy in the limit of large $M$. The conclusion coincides with one's intuition: the throughput of the decentralized SAT policy in \cite{cxr2019} is $e^{-1}$, while the throughput of the centralized MW policy is $1$, which implies the amount of delivered fresh packets in the centralized MW policy is $e$ times that of the decentralized SAT policy.
We illustrate their performances through simulations in Section~\ref{sec: numerical results}.

	\section{Non-oblivious Policies}\label{sec: signal-dependent}
	We now consider a more general class of policies in which the nodes can observe their corresponding Markov processes for decision making. In other words, we seek to  benefit from not only the AoI, but also the process realization (in a casual manner). Clearly, if all nodes try to transmit their samples at every time slot, no packet will go through due to collisions. The nodes, therefore, need to transmit packets with a lower rate. This means that they have to decide, in a decentralized manner, when to transmit. Motivated by the optimality of threshold policies in various point-to-point setups  \cite{ImerBasar10, RabiMoustakidesBaras12, Nayyar-Basar-Teneketzis-Veeravalli12, YSun-2017}, as well as their applications in age minimization over many-to-one random access channels  \cite{cxr2019}, we propose threshold policies for decision making.

	\subsection{Error-based Thinning}\label{sec: error-Based Thinning}

	Define the  {\it error process} $\psi_i(k)$ at node $i$ as follows: 
	\begin{align}\label{eq: markov process}
		\psi_i(k) = |X_i(k)-\hat{X}_i(k)|.
	\end{align}
	Since the transmitters have access to collision feedback, they can calculate $\hat{X}_i(k)$, and hence $\psi_i(k)$, in each time slot and use this information for decision making.
	One way to understand $\psi_i(k)$ is as follows. At time $k$, if the sample of node $i$ is successfully delivered, the estimation error will reduce by $\psi_i(k)$. So $\psi_i(k)$ quantifies the amount of instantaneous estimation error reduction upon successful delivery from transmitter $i$. With this viewpoint, we devise a threshold policy in which transmitters prioritize packets that have large $\psi_i(k)$.
	In particular, we design a fixed threshold $\beta$ to distinguish and prioritize nodes that offer a high instantaneous gain.
	
	The action of each node is thus as follows: node $i$ becomes ``active" if the error process $\psi_i(k)$ has crossed a pre-determined threshold $\beta.$
	Once a transmitter is active, it remains active until a packet is successfully delivered from that node.   Active nodes transmit stochastically following  Rivest’s stabilized slotted ALOHA protocol \cite[Chapter~4.2.3]{textbookslottedaloha}. Denote the number of active nodes and an estimate of the number of active nodes in time slot $k$ as $N(k)$, $\hat{N}(k)$, respectively. In particular, each active node transmits its sample with probability $p_b(k)$ which is calculated adaptively as follows based on an estimate of the number of active nodes\footnote{Since the sensors have unit buffer sizes, the
		number of  ``backlogged" nodes $N(k)$ in Rivest’s algorithm is at most $M$. One notes that  this has been incorporated in \eqref{eq: slotted aloha}. }:
	\begin{equation}\label{eq: slotted aloha}
		\begin{aligned}
			&p_b(k)=\min(1,\frac{1}{\hat{N}(k)})\\
			&\hat{N}(k)=\left\{\begin{aligned}
				&\hat{N}(k-1)+\hat{\lambda}(k)+(e-2)^{-1}\quad\text{if}\,\, c(k-1)=1\\
				&\hat{\lambda}(k)+\left( \hat{N}(k-1)-1\right)^+ 
				\quad\text{if}\,\, c(k-1)=0.
			\end{aligned}\right.
		\end{aligned}
	\end{equation}
	Here, $\hat{\lambda}(k)$ is an estimate of $\lambda(k)$, and $\lambda(k)$ is the sum arrival rate in time slot $k$. It is well-known that the maximum sum throughput of the slotted ALOHA is $e^{-1}$  \cite[Chapter~4.2.3]{textbookslottedaloha} and the regime of interest is $\lambda(k)<e^{-1}$ when $k$ is sufficient large. In our setup, $\lambda(k)$ corresponds to the {\it expected} number of nodes that become {\it active} in time slot $k$ (see Definition~\ref{def: active nodes} ahead). We refer to $\lambda(k) $  as the activation rate or the effective arrival rate in time slot $k$.
	
	So far, we have outlined a threshold policy in which a node decides to become active  if its local error process is larger than a pre-determined threshold value $\beta$. We call this procedure  {\it Error-based Thinning} (EbT). The main underlying challenge is, however, in the design of the {\it optimal} $\beta$. In the rest of this section, we will find an approximately optimal choice for $\beta$  and analyze the corresponding NAEE. We start by some preliminaries.
	\subsection{Preliminaries}

	Consider node $i$ and an inter-delivery interval $(k_{\ell-1}^{(i)}, k_\ell^{(i)}]$ (see Figure~\ref{fig:JUL}). The inter-delivery  time $I_\ell^{(i)}$ is  given by $I_\ell^{(i)}=k_\ell^{(i)}-k_{\ell-1}^{(i)}$. For any time slot $k$, $k_{\ell-1}^{(i)}<k\leq k_{\ell}^{(i)}$, we can write the error process $\psi(k)$ as follows:
	\begin{align}\label{eq: difference XhatX}
		\psi_i(k)=|X_i(k)-\hat{X}_i(k)|=\Big|\sum_{j=k_{\ell-1}^{(i)}}^{k-1}W_i(j)\Big|.
	\end{align}
	Note that from 	\eqref{eq: h_i}, $h_i(k)=k-k_{\ell-1}^{(i)}$, the term on the right hand side of \eqref{eq: difference XhatX} is the sum of $h_i(k)$ independent Gaussian noise variables. Indeed, \eqref{eq: difference XhatX} demonstrates that $\psi_i(k)$ contains both the information of sample values as well as the age with respect to source $i$. 
	
	We next define  ``active" nodes as follows.
	\begin{definition}[Active Nodes]\label{def: active nodes}
		If there exists a time slot $k_0\in(k_{\ell-1}^{(i)}, k_\ell^{(i)}]$ such that (i) $\psi_i(j)<\beta$ for all $k_{\ell-1}^{(i)}<j<k_0$ and (ii) $\psi_i(k_0)\geq\beta$, then we say that node $i$ is {\it active} in the entire interval $[k_0,k_\ell^{(i)}]$. 
	\end{definition}
	\begin{definition}[Silence Delay and Transmission Delay]\label{def: silience delay}
		Let $k_0$ be as defined in Definition \ref{def: active nodes}. We define  $J_\ell^{(i)}=k_0-k_{\ell-1}^{(i)}$ as the silence delay, and  $U_\ell^{(i)}=k_\ell^{(i)}-k_0+1$  as the transmission delay (see Figure~\ref{fig:JUL}).
	\end{definition}
	\noindent  An active source becomes inactive immediately after a successful delivery.  By the above two definitions, the inter-delivery time $I_\ell^{(i)}$ consists of two components -- the silence delay $J_\ell^{(i)}$ and the transmission delay $U_\ell^{(i)}$:
	\begin{align}\label{eq: IlLlUl}
		I_\ell^{(i)}=J_\ell^{(i)}-1+U_\ell^{(i)}.
	\end{align}
	In this equation, $J_\ell^{(i)}$ is the first time slot after $k_{\ell-1}^{(i)}$ at which $\psi_i(k)\geq\beta$ (as defined in Definition \ref{def: silience delay}).
	So $J_\ell^{(i)}-1$ represents the number of time slots in which node $i$ is  not active, and $U_\ell^{(i)}$ represents the number of time slots in which node $i$ is in active state. Recall that  active nodes transmit with probability $p_b(k)$. So $U_\ell^{(i)}$ may be larger than $1$ either because the node is active and it does not transmit or because the node transmits and experiences collision.  
	By the stationarity of the transmission scheme, the processes $\{I_\ell^{(i)}\}_{i,\ell}$, $\{J_\ell^{(i)}\}_{i,\ell}$, and $\{U_\ell^{(i)}\}_{i,\ell}$ are statistically identical across $i$ and $\ell$. 
	We define $I_\beta$, $J_\beta$, and $U_\beta$ to have the same distributions as $\{I_\ell^{(i)}\}_{i,\ell}$, $\{J_\ell^{(i)}\}_{i,\ell}$, and $\{U_\ell^{(i)}\}_{i,\ell}$, respectively. 
	
	\begin{figure}[t!]
		\centering
		\includegraphics[width=2.4in]{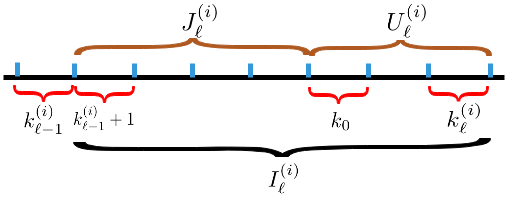}
		\caption{an example of $J_\ell^{(i)}$, $U_\ell^{(i)}$, and $I_\ell^{(i)}$.  Packets are generated at the beginning of every time slot, so $J_\ell^{(i)}$ arrivals/generations means $J_\ell^{(i)}-1$ time slots.}
		\label{fig:JUL}
	\end{figure}

	Let $\{W_j\}_j$ be an i.i.d sequence with the same distribution as $\{W_j(k)\}_j$.
	Define $$S_n=\sum_{j=1}^{n}W_j.$$
	Using the definition of $h_i(k)$ in \eqref{eq: h_i}, and by the stationarity of $\{W_j\}_j$, we conclude that
	\begin{align}\label{eq: S-E}
		\psi_i(k)\sim |S_{h_i(k)}|.
	\end{align}
	Recall that $J_\beta$ has the same distribution as  $J_\ell^{(i)}$. Then,  $J_\beta$  is the smallest time index at which $|S_n|\geq\beta$ in an inter-delivery interval.
	$J_\beta$ is a stopping time for $S_n$. From \cite[Chapter~7.5.1, Lemma~7.5.1]{Gallagerbook}, it follows that $J_\beta$  has finite moments of all orders.  Moreover, using \cite[Chapter~7.5.2]{Gallagerbook}, we have
	\begin{align}\label{eq: EJ}
		\mathbb{E}[S_{J_\beta}^2]=\sigma^2\mathbb{E}[J_\beta].
	\end{align}

	Finding an optimal $\beta$ is non-trivial because $\beta$ impacts both $J_\beta$ and $U_\beta$. 
	In the remainder of this subsection, we establish some useful expressions for the expectations of $I_\beta$ and $U_\beta$ in an optimal design.
	
	Let $a(k)$ denote the number of {\it newly} active nodes at time $k$, i.e., the number of nodes that become active from inactive states. We have $\mathbb{E}[a(k)]=\lambda(k)$, where
	$\lambda(k)$ is the expected sum arrival rate in time slot $k$ (imposed by our sampling and transmission policy). Now recall that in a traditional slotted Aloha-based random access channel,  the maximum  sum throughput  is asymptotically $e^{-1}$. This is true also for the case with buffer size $1$ where only the latest packets are stored,  as discussed in \cite[Appendix~E]{cxr2019}) and which applies to our setting here. Define $c(M)$ as the sum rate/throughput when the system contains $M$ sources.
	\begin{definition}\label{def: stabilized}
		The random access system is stabilized\footnote{Here, contrary to traditional slotted ALOHA schemes, the term ``stabilized''  does not refer to  ``stability of queues" in our problem setup. However, the term ``stabilized'' implies that the system is stationary, when the sum arrival rate is  less than $e^{-1}$.} if $\lambda_m = \limsup_{k\to\infty}\lambda(k)<e^{-1}$.
	\end{definition}
	\noindent We provide our analysis under the following two assumptions:
	\begin{assumption}\label{assumption: independent}
		Under an optimal $\beta$, when $M$ is sufficiently large, $\{a(k)\}_{k=1}^{\infty}$ are approximately independent. 
	\end{assumption}
	\begin{assumption}\label{assumption: fully channel capacity}
		Under an optimal $\beta$, when $M$ is sufficiently large, the random access system is stabilized, and $\lambda_m\approx e^{-1}$, $c(M)\approx e^{-1}$. 
	\end{assumption}
	
	Assumptions~\ref{assumption: independent},~\ref{assumption: fully channel capacity} are given for analysis tractability, and we will verify them  for our proposed $\beta$ later.  In the rest of the paper, let $M$ be sufficiently large. We seek to find an optimal $\beta$ under assumptions~\ref{assumption: independent},~\ref{assumption: fully channel capacity}.
	To transmit as many fresh samples as possible, $\beta$ is designed such that $\lambda(k)$ is as large as possible. Thus, we focus on the regime where $\lambda(k)$ is close to $e^{-1}$ when $k$ is large, and from Assumption~\ref{assumption: fully channel capacity}, $\lambda_m\approx e^{-1}$. For tractability in analysis,  we let the estimate $\hat{\lambda}(k) = e^{-1}$ for all $k$.
	Specifically, we replace $\hat{\lambda}(k)$ with $e^{-1}$ in \eqref{eq: slotted aloha}.

	Note that the system is stationary, so $U_\ell^{(i)}$ (or $U_\beta$) is a random variable and measurable. Recall from \cite[Chapter~7.5.1, Lemma~7.5.1]{Gallagerbook}, $J_\beta$ has finite moments of all orders. Therefore, $I_\beta$ is measurable. Now, we first show that the (strong) law of large numbers holds for $\{I_\ell^{(i)}\}_\ell$.
	We remark that while  $\{I_\ell^{(i)}\}_\ell$ is not independent, it is weakly correlated across $\ell$,  as we prove in Appendix~\ref{App: proof of I LLN}. We can thus conclude that the strong law of large numbers holds for $\{I_\ell^{(i)}\}_\ell$, see also \cite{RLyons}.

	Recall that $N(k)$ is the number of active nodes  at the {\it beginning} of time slot $k$. The fraction of active nodes at the {\it beginning} of time slot $k$ is hence  $N(k)/M$. 
	\begin{definition}
		Define $\alpha_\beta(k)$ as the expected fraction of active nodes:
		\begin{align}\label{eq: alpha def}
			\alpha_\beta(k)=\frac{\mathbb{E}[N(k)]}{M}.
		\end{align}
	\end{definition}
	\noindent If $\beta=0$, then all nodes are active and $\alpha_0(k)=1$; if $\beta=+\infty$, then all nodes are inactive and $\alpha_{+\infty}(k)=0$. 
	In the limit of $k\rightarrow\infty$, we denote the expected fraction of  active nodes by~$\alpha_\beta$:
	\begin{align}\label{eq: limit of alpha(k)}
		\alpha_\beta=\lim_{k\rightarrow\infty}\frac{\mathbb{E}[N(k)]}{M}=\lim_{k\to\infty} \mathbb{E}\left[\frac{1}{M}\sum_{i=1}^M \mathbf{1}(\text{node $i$ is active at time $k$})\right]
	\end{align}
	The limit in \eqref{eq: limit of alpha(k)} exists because the transmission  policy is stationary and  hence the sequence in the expectation above is stationary in the steady state. Continuing from \eqref{eq: limit of alpha(k)}, we  have
	\begin{equation}\label{eq:getridofi}
	\begin{aligned}
			\alpha_\beta&=\lim_{K\to\infty} \mathbb{E}\left[\frac{1}{MK}\sum_{k=1}^K\sum_{i=1}^M \mathbf{1}(\text{node $i$ is active at time $k$})\right]\\
			&\stackrel{(a)}{=}\mathbb{E}\left[\lim_{K\to\infty} \frac{1}{MK}\sum_{k=1}^K\sum_{i=1}^M \mathbf{1}(\text{node $i$ is active at time $k$})\right].
		\end{aligned}
	\end{equation}
	where step $(a)$ holds by the dominated convergence theorem because the sequence in the expectation \eqref{eq:getridofi} is a fraction and bounded by~$1$.
	Utilizing the symmetry and stationarity with respect to various nodes (the system), we prove the following lemma in Appendix~\ref{App: proof of alpha}, signifying that
	$\alpha_\beta$ represents the fraction of time that each node is active in the limit of $K\to\infty$, hence represents the probability of each node being active when the system is steady.
	
	\begin{lemma}\label{lem: alpha}
		When the system is stabilized, $\alpha_\beta$ exists,  and
		$\alpha_\beta=\frac{\mathbb{E}[U_\beta]}{\mathbb{E}[I_\beta]}$.
	\end{lemma}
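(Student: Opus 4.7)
The plan is to identify the inner limit in \eqref{eq:getridofi} using a renewal-reward type argument on each node's sequence of inter-delivery intervals, and then use symmetry to average over~$i$. Concretely, fix a node $i$ and, for any horizon $K$, let $T_K^{(i)} = \sum_{k=1}^K \mathbf{1}(\text{node $i$ is active at time $k$})$ denote the total time it has spent in the active state by time $K$. Let $L_K^{(i)}$ be the number of inter-delivery intervals that are completed within $[1,K]$. Then $T_K^{(i)}$ equals $\sum_{\ell=1}^{L_K^{(i)}} U_\ell^{(i)}$ up to a bounded edge contribution from a possibly partially completed interval, and $K$ is sandwiched between $\sum_{\ell=1}^{L_K^{(i)}} I_\ell^{(i)}$ and $\sum_{\ell=1}^{L_K^{(i)}+1} I_\ell^{(i)}$.

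Next I would write
\begin{equation*}
\frac{T_K^{(i)}}{K}
= \frac{\frac{1}{L_K^{(i)}}\sum_{\ell=1}^{L_K^{(i)}} U_\ell^{(i)}}{\frac{1}{L_K^{(i)}}\sum_{\ell=1}^{L_K^{(i)}} I_\ell^{(i)}} + o(1),
\end{equation*}
where the $o(1)$ term accounts for the partial interval at the right endpoint. The paper has already observed (and proves in Appendix~\ref{App: proof of I LLN}) that $\{I_\ell^{(i)}\}_\ell$ obeys the strong law of large numbers even though it is only weakly correlated; an identical argument works for $\{U_\ell^{(i)}\}_\ell$ since $U_\ell^{(i)}$ is a bounded-expectation functional of the same underlying process, and moreover $J_\beta$ (and hence $I_\beta$ and $U_\beta$) have finite moments of all orders by \cite[Lemma~7.5.1]{Gallagerbook}. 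Hence both numerator and denominator converge almost surely to $\mathbb{E}[U_\beta]$ and $\mathbb{E}[I_\beta]$ respectively, and the ratio converges almost surely to $\mathbb{E}[U_\beta]/\mathbb{E}[I_\beta]$. Because $L_K^{(i)} \to \infty$ a.s.\ and $U_\ell^{(i)}$ is integrable, the edge term vanishes.

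Having established $\lim_{K\to\infty} T_K^{(i)}/K = \mathbb{E}[U_\beta]/\mathbb{E}[I_\beta]$ almost surely for every $i$, the inner quantity in \eqref{eq:getridofi} reads
\begin{equation*}
\lim_{K\to\infty} \frac{1}{MK}\sum_{k=1}^K\sum_{i=1}^M \mathbf{1}(\text{node $i$ is active at time $k$})
= \frac{1}{M}\sum_{i=1}^M \lim_{K\to\infty}\frac{T_K^{(i)}}{K}
= \frac{\mathbb{E}[U_\beta]}{\mathbb{E}[I_\beta]},
\end{equation*}
which is a deterministic constant. Applying the expectation (justified by the dominated convergence theorem already invoked in step $(a)$ of \eqref{eq:getridofi}) yields $\alpha_\beta = \mathbb{E}[U_\beta]/\mathbb{E}[I_\beta]$, as desired. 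In particular the limit exists, proving both claims of the lemma simultaneously.

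The main technical obstacle is the a.s.\ convergence of the partial sums of $U_\ell^{(i)}$ and $I_\ell^{(i)}$: these sequences are not i.i.d.\ because the random-access dynamics (in particular $\hat N(k)$ and $p_b(k)$) couple consecutive intervals through the global channel state. I would handle this by pointing to the weak-correlation argument in Appendix~\ref{App: proof of I LLN} for $\{I_\ell^{(i)}\}$ and noting that $U_\ell^{(i)} = I_\ell^{(i)} - J_\ell^{(i)} + 1$, where $\{J_\ell^{(i)}\}$ is i.i.d.\ across $\ell$ (it is a stopping time of the i.i.d.\ innovations $\{W_i(k)\}$), so the same strong LLN transfers to $\{U_\ell^{(i)}\}$. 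Stationarity in the steady state ensures that the distribution used in defining $\mathbb{E}[U_\beta]$ and $\mathbb{E}[I_\beta]$ is independent of $\ell$ and $i$, which is what is needed to read off the ratio.
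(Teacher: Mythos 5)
Your proposal is correct and follows essentially the same route as the paper: both identify the long-run active fraction of node $i$ with the renewal--reward ratio $\lim_{n\to\infty}\bigl(\sum_{\ell=1}^{n}U_\ell^{(i)}\bigr)/\bigl(\sum_{\ell=1}^{n}I_\ell^{(i)}\bigr)$ and invoke the weak-correlation strong law of Appendix~\ref{App: proof of I LLN} for the denominator, then average over $i$ by symmetry. The only (minor, and arguably cleaner) difference is that you also establish almost-sure convergence of the numerator --- via $U_\ell^{(i)}=I_\ell^{(i)}-J_\ell^{(i)}+1$ with $\{J_\ell^{(i)}\}_\ell$ i.i.d.\ --- so the inner limit is a deterministic constant, whereas the paper instead pulls only the denominator's a.s.\ limit out and exchanges limit and expectation for $\frac{1}{n}\sum_{\ell}U_\ell^{(i)}$ by dominated convergence.
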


	Since $\alpha_\beta$ exists, then, when $k\to\infty$, the expected number of nodes that become active in every time slot is $(1-\alpha_\beta)M\alpha_\beta$, and
	\begin{align}\label{eq: alpha_solution}
		(1-\alpha_\beta)M\alpha_\beta=\lim_{k\to\infty}\lambda(k)=\limsup_{k\to\infty}\lambda(k)= \lambda_m.
	\end{align}	
	From Assumption~\ref{assumption: fully channel capacity}, $\lambda_m\approx e^{-1}<1$. Using \eqref{eq: alpha_solution}, one sees that $M\alpha_\beta$ is an infinitesimal of higher order than $M$. Now using Lemma~\ref{lem: alpha}, we can show that $\mathbb{E}[U_\beta]$ is an infinitesimal of higher order than $M$, as discussed in the following lemma.
	\begin{lemma}\label{lem EU}
		When the system is stabilized, 
		\begin{align}
			&\mathbb{E}[I_\beta]= \frac{M}{c(M)}\label{eq: I_beta}\\
			&\mathbb{E}[U_\beta]= \frac{M}{c(M)}\alpha_\beta = o(M) \label{eq: U}
		\end{align}
		where $\alpha_\beta$ is the expected fraction of active nodes in the steady state as defined in~\eqref{eq: limit of alpha(k)}. 
	\end{lemma}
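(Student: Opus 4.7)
The plan is to establish \eqref{eq: I_beta} first by a throughput/renewal argument that ties the per-node inter-delivery time to the system throughput $c(M)$, and then to derive \eqref{eq: U} as an immediate corollary of Lemma~\ref{lem: alpha}, with the $o(M)$ bound following from \eqref{eq: alpha_solution} and Assumption~\ref{assumption: fully channel capacity}.

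For the first identity, I would fix node $i$ and let $N_K^{(i)}$ denote the number of packets successfully delivered from node $i$ within the first $K$ slots. The strong law of large numbers applies to $\{I_\ell^{(i)}\}_\ell$ (whose validity despite weak correlation across $\ell$ is argued just before the lemma and established in Appendix~\ref{App: proof of I LLN}), yielding
\begin{align*}
\frac{1}{n}\sum_{\ell=1}^n I_\ell^{(i)} \xrightarrow{a.s.} \mathbb{E}[I_\beta].
\end{align*}
Observing the sandwich $\sum_{\ell=1}^{N_K^{(i)}} I_\ell^{(i)} \le K < \sum_{\ell=1}^{N_K^{(i)}+1} I_\ell^{(i)}$, an elementary renewal inversion gives $N_K^{(i)}/K \to 1/\mathbb{E}[I_\beta]$ almost surely. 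By the very definition of the system throughput, $\sum_{i=1}^M N_K^{(i)}/K \to c(M)$, and by symmetry across the $M$ identical nodes the per-node throughput is $c(M)/M$. Equating the two expressions yields $1/\mathbb{E}[I_\beta] = c(M)/M$, i.e., \eqref{eq: I_beta}.

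The expression for $\mathbb{E}[U_\beta]$ in \eqref{eq: U} is then immediate: combining Lemma~\ref{lem: alpha} ($\alpha_\beta = \mathbb{E}[U_\beta]/\mathbb{E}[I_\beta]$) with \eqref{eq: I_beta} gives $\mathbb{E}[U_\beta] = \alpha_\beta M/c(M)$. To verify the $o(M)$ claim, I would plug into \eqref{eq: alpha_solution}: $(1-\alpha_\beta)M\alpha_\beta = \lambda_m \approx e^{-1}$ by Assumption~\ref{assumption: fully channel capacity}. Since the paragraph preceding the lemma observes $\alpha_\beta\to 0$ under the stabilized regime of interest, we have $1-\alpha_\beta \to 1$ and hence $M\alpha_\beta \to \lambda_m = O(1)$. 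Together with $c(M) \to e^{-1}$ bounded away from zero, this gives $\mathbb{E}[U_\beta] = M\alpha_\beta/c(M) = O(1)$, which is a fortiori $o(M)$.

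The main obstacle will be the SLLN step itself: the inter-delivery times $I_\ell^{(i)}$ depend on the full state of the random access dynamics (the activation epoch depends on a Gaussian random walk, while the subsequent transmission delay depends on the pool of all active nodes and on the adaptive backlog estimator $\hat{N}(k)$), so they are neither independent nor identically distributed across $\ell$ in the strict sense. The classical i.i.d.\ SLLN therefore cannot be invoked directly; I would instead rely on the weak-correlation bound sketched in Appendix~\ref{App: proof of I LLN} combined with an SLLN for weakly dependent sequences in the spirit of \cite{RLyons}. Once that step is in place, the renewal inversion, the throughput identification, and the plug-in argument using \eqref{eq: alpha_solution} are routine.
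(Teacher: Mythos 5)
Your proposal is correct and follows essentially the same route as the paper's proof: a renewal/throughput identification (via the SLLN for the weakly correlated sequence $\{I_\ell^{(i)}\}_\ell$ from Appendix~\ref{App: proof of I LLN}) to get $\mathbb{E}[I_\beta]=M/c(M)$, then Lemma~\ref{lem: alpha} for $\mathbb{E}[U_\beta]$, and finally \eqref{eq: alpha_solution} with Assumption~\ref{assumption: fully channel capacity} to conclude $M\alpha_\beta=O(1)$ and hence $\mathbb{E}[U_\beta]=o(M)$. The only cosmetic difference is that the paper solves the quadratic in \eqref{eq: alpha_solution} explicitly to bound $M\alpha_\beta\le 2\lambda_m$, whereas you argue via $\alpha_\beta\to 0$; both yield the same $O(1)$ bound.
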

	\begin{remark}
		Lemma~\ref{lem EU} coincides with one's intuition. Recall that the throughput of the channel is $c(M)$, so the throughput for each node is $\frac{c(M)}{M}$ (due to the symmetry). From the perspective of expectation, every successful delivery takes $\frac{M}{c(M)}$ time slots, i.e., $\mathbb{E}[I_\beta]= \frac{M}{c(M)}$. In addition, note that the expected number of active node is $M\alpha_{\beta}$, so the throughput of every active  node is $\frac{c(M)}{M\alpha_{\beta}}$. Again, from the perspective of expectation, every successful delivery from active nodes takes $\frac{M}{c(M)}\alpha_{\beta}$ time slots, i.e., $\mathbb{E}[U_\beta]= \frac{M}{c(M)}\alpha_{\beta}$.
	\end{remark}
	\begin{proof}
		The proof of Lemma~\ref{lem EU} is given in Appendix~\ref{App: proof of EU EI}. 
	\end{proof}

	\subsection{The closed form of NAEE}
	We next  derive a closed form expression for the attained NAEE, $L^{EbT}(M)$.
	Using \eqref{eq: S-E}, we re-write \eqref{eq: E-MMSE} as follows.
	\begin{equation}\label{eq: E-MMSE-1}
		L^{EbT}(M)=\lim_{K\rightarrow\infty}\mathbb{E}[\frac{1}{M^2K}\sum_{i=1}^{M}\sum_{k=1}^{K}S_{h_i(k)}^2].
	\end{equation}
	Define $\Delta^{(i)}_\ell$ as the sum of $S_{h_i(k)}^2$ in the interval $k\in(k_{\ell-1}^{(i)}, k_\ell^{(i)}]$:
	\begin{align}\label{eq: def delta}
		\Delta^{(i)}_\ell=\sum_{k=k_{\ell-1}^{(i)}+1}^{k_{\ell}^{(i)}}S_{h_i(k)}^2.
	\end{align}
	Since $h_i(k)$ has the same distribution in the interval $[k_{\ell-1}^{(i)}+1, k_{\ell}^{(i)}]$ over $i$ and $\ell$, then $\Delta_\ell^{(i)}$ has the same distribution over $i$ and $\ell$. We define $\Delta_\beta$ to have the same distribution as $\Delta_\ell^{(i)}$. The next lemma shows that the expected time average in \eqref{eq: E-MMSE-1} takes a closed form expression in terms of $\mathbb{E}[\Delta_\beta]$ and $\mathbb{E}[I_\beta]$. 
	
	\begin{lemma}\label{lem: I delta SSN}
		The proposed EbT policy attains the following NAEE:
		\begin{align}\label{eq: LLN L}
			L^{EbT}(M)=\frac{1}{M}\frac{\mathbb{E}[\Delta_\beta]}{\mathbb{E}[I_\beta]}.	
		\end{align}
	\end{lemma}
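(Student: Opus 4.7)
The plan is to recognize \eqref{eq: E-MMSE-1} as a time-average of a non-negative reward accumulated over the inter-delivery intervals of each node, and then apply a renewal-type strong law of large numbers. Fix a node $i$ and let $N_K^{(i)} = \max\{\ell : k_\ell^{(i)} \le K\}$ be the number of deliveries by time $K$. Partition the inner sum by inter-delivery intervals as
\begin{align*}
\sum_{k=1}^K S_{h_i(k)}^2 \;=\; \sum_{\ell=1}^{N_K^{(i)}} \Delta_\ell^{(i)} \;+\; R_K^{(i)},
\end{align*}
where $R_K^{(i)}$ collects the terms from the (possibly incomplete) interval that straddles time $K$, together with any boundary correction coming from the initial interval before $k_0^{(i)}$.

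Next I would invoke the strong law of large numbers for the sequence $\{I_\ell^{(i)}\}_\ell$, which, by the paragraph preceding the lemma, holds despite the lack of independence because the sequence is weakly correlated across $\ell$ (this is exactly what Appendix~\ref{App: proof of I LLN} establishes). By an analogous argument the same limiting behavior holds for $\{\Delta_\ell^{(i)}\}_\ell$, since $\Delta_\ell^{(i)}$ inherits stationarity and weak correlation from $\{I_\ell^{(i)}\}_\ell$ and is built from the same underlying process. Consequently $N_K^{(i)}/K \to 1/\mathbb{E}[I_\beta]$ almost surely and $\frac{1}{N_K^{(i)}}\sum_{\ell=1}^{N_K^{(i)}} \Delta_\ell^{(i)} \to \mathbb{E}[\Delta_\beta]$ almost surely, so
\begin{align*}
\frac{1}{K}\sum_{k=1}^K S_{h_i(k)}^2 \;\xrightarrow[K\to\infty]{\text{a.s.}}\; \frac{\mathbb{E}[\Delta_\beta]}{\mathbb{E}[I_\beta]},
\end{align*}
the residual $R_K^{(i)}/K$ vanishing almost surely because $I_\beta$ has finite mean and $\Delta_\beta$ has finite expectation (which follows from $\mathbb{E}[I_\beta]<\infty$ together with the finite-moment property of $J_\beta$ noted in the preliminaries via \eqref{eq: EJ}).

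To produce \eqref{eq: LLN L}, I would average the above limit over $i=1,\dots,M$, using the symmetry that the distribution of each per-node time-average is identical across $i$, so the sum over $i$ contributes a factor $M$ and together with the $1/M^2$ prefactor yields the stated $\frac{1}{M}\mathbb{E}[\Delta_\beta]/\mathbb{E}[I_\beta]$. The last step is to exchange the outer expectation with the almost-sure limit. I would justify this by uniform integrability of the time-average $\frac{1}{M^2K}\sum_i\sum_k S_{h_i(k)}^2$: each $S_{h_i(k)}^2$ has expectation $\sigma^2\,\mathbb{E}[h_i(k)]$, and by Lemma~\ref{lem EU} together with the finite moments of $J_\beta$ the age $h_i(k)$ admits moment bounds uniform in $k$, yielding a uniform $L^p$ bound ($p>1$) on the time-average that suffices for Vitali/dominated convergence. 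This interchange is the main technical obstacle, since it is where the weak dependence across $\ell$ and across $i$, together with the finite moment control, must all be combined cleanly; everything else is essentially a renewal-reward bookkeeping argument.
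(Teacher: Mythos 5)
Your overall skeleton --- partition the time sum into inter-delivery blocks, apply a renewal-reward / strong-law argument, use symmetry across $i$, and then interchange the outer expectation with the almost-sure limit --- is the same route the paper takes in Appendix~\ref{App: LLN for sequences}. However, two of your justifications do not go through as written.

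First, you assert a strong law of large numbers for $\{\Delta_\ell^{(i)}\}_\ell$ ``by an analogous argument'' to the one for $\{I_\ell^{(i)}\}_\ell$. That analogy is not automatic: the weak-correlation bound in Appendix~\ref{App: proof of I LLN} rests on the decomposition $I_\ell^{(i)}=J_\ell^{(i)}+U_\ell^{(i)}-1$ with $J_\ell^{(i)}$ independent across $\ell$ and geometric ergodicity controlling the $U_\ell^{(i)}$ correlations, whereas $\Delta_\ell^{(i)}=\sum_{j=1}^{I_\ell^{(i)}}S_j^2$ requires fourth-moment control of the partial sums up to a stopping time and a separate covariance estimate. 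The paper avoids needing this entirely: it applies the SLLN only to the denominator $\{I_\ell^{(i)}\}_\ell$, pulls $\mathbb{E}[I_\beta]$ out as a constant, and then handles the numerator by exchanging $\lim_{n_i\to\infty}$ with $\mathbb{E}$ and using only that the $\Delta_\ell^{(i)}$ are identically distributed across $\ell$, so that $\mathbb{E}\big[\tfrac{1}{n_i}\sum_{\ell}\Delta_\ell^{(i)}\big]=\mathbb{E}[\Delta_\beta]$ for every $n_i$. If you want to keep your version you must actually prove the SLLN for $\{\Delta_\ell^{(i)}\}_\ell$; otherwise adopt the paper's ordering of the two interchanges.

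Second, your uniform-integrability argument invokes $\mathbb{E}[S_{h_i(k)}^2]=\sigma^2\,\mathbb{E}[h_i(k)]$. This is Wald's identity, and Remark~1 of the paper explicitly points out that it fails for non-oblivious policies such as EbT: under a threshold rule, $h_i(k)$ is not independent of the increments $\{W_i(j)\}_j$, so conditioning on $h_i(k)$ changes their distribution and the identity breaks. Any moment bound you use to justify the interchange of expectation and limit has to be obtained another way (e.g., via the finite moments of the stopping time $J_\beta$ from \cite[Chapter~7.5.1]{Gallagerbook} together with bounds on $U_\beta$, which is essentially the domination the paper appeals to). The remaining bookkeeping --- the residual term $R_K^{(i)}/K\to 0$ and the symmetry step producing the $1/M$ factor --- is fine and matches the paper.
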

	
	\begin{proof}
		The proof of Lemma~\ref{lem: I delta SSN} is given in Appendix~\ref{App: LLN for sequences}. 
	\end{proof}

	Similar to \eqref{eq: def delta}, $\Delta_\beta$ can be expressed as
	\begin{align}\label{eq: delta_beta}
		\Delta_\beta = \sum_{j=1}^{I_\beta} S_j^2.
	\end{align}
	From \eqref{eq: delta_beta}, the NAEE in \eqref{eq: LLN L} can now be re-written as follows
\begin{equation}\label{eq: L-delta1}
\begin{aligned}
L^{EbT}(M)=\frac{1}{M}\frac{\mathbb{E}\Big[\sum_{j=1}^{I_\beta}S_j^2\Big]}{\mathbb{E}[I_\beta]}=\frac{1}{M}\frac{\mathbb{E}\Big[\sum_{j=1}^{J_\beta+U_\beta-1}S_j^2\Big]}{\mathbb{E}[I_\beta]}\triangleq L_1^{EbT}(M)+L_2^{EbT}(M)
\end{aligned}
\end{equation}
where
\begin{align}
L_1^{EbT}&(M)=\frac{1}{M}\frac{\mathbb{E}\Big[\sum_{j=1}^{J_\beta}S_j^2\Big]}{\mathbb{E}[I_\beta]}\label{eq: Lpi221}\\
L_2^{EbT}&(M)=\frac{1}{M}\frac{\mathbb{E}\Big[\sum_{j=J_\beta+1}^{J_\beta+U_\beta-1}S_j^2\Big]}{\mathbb{E}[I_\beta]}=\frac{1}{M}\frac{2\mathbb{E}[J_\beta](\mathbb{E}[U_\beta]-1)+\mathbb{E}[U_\beta^2]-\mathbb{E}[U_\beta]}{2\mathbb{E}[I_\beta]}\sigma^2\label{eq: Lpi222}.
\end{align}
The equality in \eqref{eq: Lpi222} is proved in Appendix~\ref{App: proof of L2}. Note that $L^{EbT}$ is a function of the peak age $I_\beta$, the silience delay $J_\beta$,  the transmission delay $U_\beta$, and the process realization through~$W_j$.

	\subsection{Optimizing $\beta$ Approximately}\label{sec: approximation of beta}

	Finally, we find approximate closed form expressions for $L_1^{EbT}(M)$ and $L_2^{EbT}(M)$. Let $M$ be sufficiently large. Using \eqref{eq: U} along with the  the fact that  $\mathbb{E}[J_\beta]\leq\mathbb{E}[I_\beta]$, one can simplify \eqref{eq: Lpi222}  in the limit of large $M$:
	\begin{align}\label{eq: Lpi223}
		L_2^{EbT}(M)=&\frac{1}{M}\cdot\frac{\mathbb{E}[U_\beta^2]}{2\mathbb{E}[I_\beta]}\sigma^2.
	\end{align}

	The following lemma comes in handy in our approximations.
	\begin{lemma}\label{lem: brown motion}
		Consider a Brownian motion $B_t$. Define $J=\inf\{t\geq0, |B_t|\geq a\}$. The following holds:
		\begin{itemize}
			\item [] (1) \cite[Chapter~7, Theorem~7.5.5, Theorem~7.5.9]{Durrett} $\mathbb{E}[J] = a^2$ and $\mathbb{E}[J^2]=\frac{5a^4}{3}$;
			\item [] (2) $\mathbb{E}[\int_{0}^{J}B_t^2dt]=\frac{1}{10}\mathbb{E}[J^2] = \frac{1}{6}a^4.$
		\end{itemize}
	\end{lemma}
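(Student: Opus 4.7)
The plan is to prove part (2) via Itô's formula applied to $f(x)=x^4$, reducing the integral $\int_0^J B_t^2\,dt$ to a deterministic boundary term at the stopping time $J$. Part (1) is cited directly from Durrett, so $\mathbb{E}[J]=a^2$ and $\mathbb{E}[J^2]=5a^4/3$ are available. Since the stated identity $\tfrac{1}{10}\mathbb{E}[J^2]=\tfrac{1}{6}a^4$ is just the arithmetic check $\tfrac{1}{10}\cdot\tfrac{5a^4}{3}=\tfrac{a^4}{6}$, the actual content is to show $\mathbb{E}\bigl[\int_0^J B_t^2\,dt\bigr]=a^4/6$.

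Applying Itô's formula to the smooth function $f(x)=x^4$ gives $d(B_t^4)=4B_t^3\,dB_t+6B_t^2\,dt$. Integrating from $0$ to $J$ and using $B_0=0$ yields
\begin{equation*}
B_J^4 \;=\; 4\int_0^J B_t^3\,dB_t \;+\; 6\int_0^J B_t^2\,dt.
\end{equation*}
By path continuity and the definition of the hitting time, $|B_J|=a$ almost surely, so $B_J^4=a^4$. If the Itô integral on the right has zero expectation, then taking expectations gives $a^4=6\,\mathbb{E}\bigl[\int_0^J B_t^2\,dt\bigr]$, which is exactly the desired $\mathbb{E}\bigl[\int_0^J B_t^2\,dt\bigr]=a^4/6$.

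The only nontrivial step is justifying that the local martingale $N_t:=\int_0^{t\wedge J}4B_s^3\,dB_s$ is a true martingale with $\mathbb{E}[N_J]=0$, which is the main (and essentially only) obstacle. On $\{s\le J\}$ the path is bounded by $|B_s|\le a$, so $|B_s^3|\le a^3$, and part (1) supplies $\mathbb{E}[J]=a^2<\infty$. Hence
\begin{equation*}
\mathbb{E}\!\left[\int_0^J (4B_s^3)^2\,ds\right] \;\le\; 16a^6\,\mathbb{E}[J] \;=\; 16a^8 \;<\; \infty,
\end{equation*}
so $N$ is a square-integrable martingale on $[0,J]$, optional stopping applies, and $\mathbb{E}[N_J]=0$. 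Once this is in place the calculation closes in one line, and the equality $\tfrac{1}{10}\mathbb{E}[J^2]=\tfrac{1}{6}a^4$ is verified using the value of $\mathbb{E}[J^2]$ from part (1).
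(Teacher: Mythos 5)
Your proof is correct and follows essentially the same route as the paper: the paper simply cites an external lemma for the identity $\mathbb{E}\bigl[\int_0^J B_t^2\,dt\bigr]=\tfrac{1}{6}\mathbb{E}[B_J^4]$ and then uses $|B_J|=a$, whereas you derive that same identity directly via It\^o's formula on $x^4$ together with a correct square-integrability check that justifies optional stopping. This makes your argument self-contained but mathematically identical in substance to the paper's.
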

	\begin{proof}
		The proof of Lemma~\ref{lem: brown motion} is given in Appendix~\ref{App: Proof of brown motion}.
	\end{proof}
	
	For any $j$, $\frac{S_j}{\sigma}$ is  Gaussian with mean zero and variance $j$. We propose to use $B_j$ as an approximation of $\frac{S_j}{\sigma}$. Letting $a=\beta/\sigma$ in Lemma~\ref{lem: brown motion}, we obtain
	\begin{align}
		&\mathbb{E}\big[J_\beta]\approx\frac{\beta^2}{\sigma^2},\,\, \mathbb{E}[J_\beta^2]\approx \frac{5\beta^4}{3\sigma^4}\label{eq: J}\\
		&\mathbb{E}\big[\sum_{j=1}^{J_\beta}S_j^2\big]\approx\frac{\beta^4}{6\sigma^2}\approx\frac{1}{10}\mathbb{E}[J_\beta^2]\label{eq: sumJ}.
	\end{align}
	The approximation error analysis is provided in Section~\ref{sec: Approximating Error Analysis}.
	
	Substituting \eqref{eq: sumJ} into \eqref{eq: L-delta1}, we  find the following approximation for $L^{EbT}$:
	\begin{align}\label{eq: L1 2}
		\hat{L}^{EbT}(M)= \frac{\frac{1}{5}\mathbb{E}[J_\beta^2] + \mathbb{E}[U_\beta^2]}{2M\mathbb{E}[I_\beta]}\sigma^2.
	\end{align}

	\begin{theorem}\label{thm: optimalbeta}
Let $M$ be sufficiently large. The optimal $\beta^*$ is approximately given by
\begin{align*}
\beta^*\approx\hat{\beta}= \sigma\sqrt{eM},
\end{align*}
		and
		\begin{align}\label{eq: estimate L}
			\hat{L}^{EbT} = \frac{e}{6}\sigma^2.
		\end{align}
	\end{theorem}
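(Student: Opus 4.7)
The plan is to treat the theorem as a two-step asymptotic computation: first pin down the large-$M$ behavior of $\mathbb{E}[I_\beta]$, $\mathbb{E}[U_\beta]$, and $\mathbb{E}[J_\beta]$ using the stabilization assumption together with the sample-path accounting already established, and then invert the Brownian-motion approximation to read off the threshold $\beta$. Substituting the resulting moments into the closed-form \eqref{eq: L1 2} should give the asserted constant $e/6$.

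To carry out the first step, I would start by invoking Assumption~\ref{assumption: fully channel capacity}, which gives $c(M)\approx e^{-1}$, so Lemma~\ref{lem EU} yields $\mathbb{E}[I_\beta]\approx eM$. The steady-state balance \eqref{eq: alpha_solution} is $(1-\alpha_\beta)M\alpha_\beta\approx e^{-1}$, a quadratic in $\alpha_\beta$ with two roots; since Lemma~\ref{lem EU} asserts $\mathbb{E}[U_\beta]=o(M)$ and $\mathbb{E}[U_\beta]=\alpha_\beta\mathbb{E}[I_\beta]$ by Lemma~\ref{lem: alpha}, one must take the small root $\alpha_\beta\approx 1/(eM)$, which gives $\mathbb{E}[U_\beta]\approx 1$. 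Feeding these into the decomposition \eqref{eq: IlLlUl}, $\mathbb{E}[J_\beta]=\mathbb{E}[I_\beta]-\mathbb{E}[U_\beta]+1\approx eM$. Finally, setting this equal to the Brownian-motion approximation in \eqref{eq: J}, $\mathbb{E}[J_\beta]\approx\beta^2/\sigma^2$, solves for the threshold as $\beta\approx\sigma\sqrt{eM}=\hat\beta$.

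For the second step, I would use Lemma~\ref{lem: brown motion} to get $\mathbb{E}[J_\beta^2]\approx 5\beta^4/(3\sigma^4)=5e^2M^2/3$, so that $\tfrac{1}{5}\mathbb{E}[J_\beta^2]\approx e^2M^2/3$. The denominator of \eqref{eq: L1 2} is $2M\mathbb{E}[I_\beta]\approx 2eM^2$, and their ratio delivers $(e/6)\sigma^2$, matching \eqref{eq: estimate L}. The last piece to verify is that the $\mathbb{E}[U_\beta^2]$ term in the numerator of \eqref{eq: L1 2} is $o(M^2)$ and therefore negligible next to $\mathbb{E}[J_\beta^2]=\Theta(M^2)$.

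The main obstacle is precisely this control of $\mathbb{E}[U_\beta^2]$: unlike $\mathbb{E}[U_\beta]$, which is handed to us by Lemma~\ref{lem EU}, the second moment requires a tail bound on the transmission delay under Rivest's stabilized slotted ALOHA. The intended argument is that because the estimator $\hat{N}(k)$ is held at $e^{-1}$ by Assumption~\ref{assumption: fully channel capacity} and the number of backlogged nodes concentrates around this value, the success probability per slot for an active node is bounded away from zero, making $U_\beta$ stochastically dominated by a geometric random variable with $O(1)$ parameter; this yields $\mathbb{E}[U_\beta^2]=O(1)$, which is far smaller than $\mathbb{E}[J_\beta^2]=\Theta(M^2)$ and thus drops out of the asymptotic calculation.
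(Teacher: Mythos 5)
Your calculation arrives at the right constants, but the route to $\beta^*$ is genuinely different from the paper's, and one step is asserted rather than proved. The paper (Appendix~\ref{App: Proof of optimal beta}) treats \eqref{eq: L1 2} as an objective in $\beta$: using \eqref{eq: IlLlUl} and the independence of $J_\beta$ and $U_\beta$ it rewrites the numerator as $\tfrac{M^2}{c(M)^2}-\tfrac{2\beta^4}{3\sigma^4}+\mathrm{Var}(U_\beta)$, derives the feasibility constraint $\beta\le\sigma\sqrt{M/c(M)}$ from $\mathbb{E}[U_\beta]\ge 1$ together with $\mathbb{E}[J_\beta]>\beta^2/\sigma^2$, and takes the boundary point because the objective is decreasing in $\beta$ over that range. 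You instead pin $\beta^*$ down by a consistency argument: Assumption~\ref{assumption: fully channel capacity} plus Lemma~\ref{lem EU} and \eqref{eq: alpha_solution} force $\mathbb{E}[I_{\beta^*}]\approx eM$ and $\alpha_{\beta^*}\approx 1/(eM)$, hence $\mathbb{E}[U_{\beta^*}]\approx1$ and $\mathbb{E}[J_{\beta^*}]\approx eM$, and inverting \eqref{eq: J} gives $\sigma\sqrt{eM}$. This is shorter and makes the physical picture transparent, but it only identifies the value the optimal threshold must take given the assumptions; it does not by itself show the NAEE is minimized there. The paper's monotonicity-plus-constraint argument supplies exactly that, so you should at least add the observation that the approximate objective decreases in $\beta$ on the feasible range.

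The step you flag as ``the main obstacle'' is indeed where the paper does the real work, and your sketch overreaches. You claim $U_\beta$ is stochastically dominated by a geometric random variable with $O(1)$ parameter, hence $\mathbb{E}[U_\beta^2]=O(1)$; but the per-slot success probability of an active node depends on the random pair $(N(k),\hat N(k))$, and bounding it away from zero uniformly in $M$ requires controlling that chain, not just its mean. The paper does not prove $O(1)$: it establishes geometric ergodicity of $(N(k),\hat N(k))$, expands the dominant terms of $\Pr(U_{\beta^*}=l)$, shows the effective success probability $\mu$ satisfies $1/\mu=o(M)$, and concludes only $\mathbb{E}[U_{\beta^*}^2]=o(M^2)$ --- which is all that is needed against $\tfrac{1}{5}\mathbb{E}[J_\beta^2]=\Theta(M^2)$. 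Your concentration claim would need an actual proof; as written it is the one genuine gap in the proposal.
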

	\begin{proof}
		The detailed proof of Theorem~\ref{thm: optimalbeta} is given in Appendix~\ref{App: Proof of optimal beta}. Here, we only provide a roadmap of the proof. 
		{\bf (i)} After simplifying $\hat{L}^{EbT}(M)$ in \eqref{eq: L1 2} by using \eqref{eq: IlLlUl}, \eqref{eq: I_beta}, \eqref{eq: U}, \eqref{eq: J}, we find
		$\hat{L}^{EbT}(M)\approx \frac{\frac{M^2}{c(M)^2} - \frac{2\beta^4}{3\sigma^4}+Var(U_\beta)}{2\frac{M^2}{c(M)}}\sigma^2.$
		{\bf (ii)} We  show that the term $\frac{Var(U_{\beta^*})}{2\frac{M^2}{c(M)}}$ is negligible. {\bf (iii)} We derive $\beta^*\approx\hat{\beta}=\sigma\sqrt{\frac{M}{c(M)}}$ as an (approximate) minimizer of NAEE. This leads to $\hat{L}^{EbT} = \frac{1}{6c(M)}\sigma^2$. 
		
	\end{proof}
	
	Finally, Assumptions~\ref{assumption: independent},~\ref{assumption: fully channel capacity}  are verified (approximately) for $\beta^*$  when $M$ is sufficiently large in Appendix~\ref{App: assumptions verify}.

	It is interesting to compare the performance of the proposed EbT policy with the  oblivious decentralized and centralized policies of Section \ref{sec: Signal-independent policies}. 
	From \eqref{eq: obliviousL}, \eqref{eq: obliviousJ}, and \eqref{eq: upperboundJ},
	$$
	\lim_{M\to\infty}L^{SAT}(M)=\frac{e}{2}\sigma^2.
	$$
	using \eqref{eq: upperboundL} and \eqref{eq: estimate L}, we obtain
	\begin{align}\label{eq: penalty3}
		\lim_{M\to\infty}\frac{L^{SAT}(M)}{\hat{L}^{EbT}(M)}\approx3.
	\end{align}
	The NAEE of oblivious SAT policy is around three times that of the EbT policy.
	From \eqref{eq: SAT}, the NAEE of the oblivious MW policy of Section \ref{sec: Signal-independent policies} is asymptotically $\frac{\sigma^2}{2}$ and comparing with $\frac{e}{6}\sigma^2=0.455\sigma^2$ one concludes that the NAEE of the EbT policy is close to that of the oblivious MW policy. We remark that since $\hat{L}^{EbT}(M)$ is an estimate of $L^{EbT}(M)$, these comparisons are not exact. We will also compare the numerical performance of Algorithm \ref{alg: Limit Threshold-slotted ALOHA} with oblivious policies as well as other state-of-the-art algorithms in Section \ref{sec: numerical results}. 	
	Algorithm~\ref{alg: Limit Threshold-slotted ALOHA} below summarizes the proposed decentralized error-based transmission policy.
	\begin{algorithm}
		\caption{Error-based Thinning (EbT)}\label{alg: Limit Threshold-slotted ALOHA}
		\begin{algorithmic}
			\STATE {Set the time horizon $K$.}
			\STATE {Set initial points:  $h_i(0)=1$, $X_i(0)=\hat{X}_i(0)=0$ for $i=1,2,\cdots,M$; $c(0)=0$; $d_i(k)=0$; $p_b(0)=1$; $n(0)=0$; $k=1$. }

			\STATE {Set ${\tt \beta}^*=\sigma\sqrt{eM}$.} 
			
			\REPEAT
			
			\STATE {\bf Step 1:} For each node $i$, observe the collision feedback $c(k-1)$ and $d_i(k-1)$ at the end of time slot $k-1$, and update $k_\ell^{(i)}$'s and $\hat{X}_i(k)$, respectively.

			\STATE {\bf Step 2:} For each node $i$, observe $X_i(k)$ \big(which evolves according to \eqref{eq: Wiener process}\big) and compute $\psi_i(k)$ by \eqref{eq: markov process}.
			
			\STATE {\bf Step 3:}  If $\psi_i(k)<{\tt \beta}^*$, then node $i$ does not transmit packets; otherwise it transmits a packet with probability $p_b(k)$.
			
			\STATE {\bf Step 4:} Calculate $p_b(k)$ by \eqref{eq: slotted aloha} in which $\lambda(k)=e^{-1}$.	
			
			\UNTIL{$k=K$}
			
			\STATE Calculate $$L_K^{EbT}=\frac{1}{M^2}\sum_{i=1}^{M}\frac{1}{K}\sum_{k=0}^{K}\psi_i^2(k).$$
		\end{algorithmic}
	\end{algorithm}

	\subsection{Approximation Error Analysis}\label{sec: Approximating Error Analysis}
	Note that approximations are used in \eqref{eq: J} and \eqref{eq: sumJ}, now we analyze the approximation error in terms of $\sigma^2$.  The approximation error of $L^{EbT}$ consists of (i) the approximation error in \eqref{eq: J} and (ii) the approximation error in \eqref{eq: sumJ}, both of which are incurred when approximating an autoregressive Markov process with a Wiener process. In other words, the approximation error is due to the discretization of the Wiener process. This discretization is analyzed by
	the Langevin dynamics  in \cite{AMH1987}. In particular, 
	$\frac{S_n}{\sigma}=\sum_{i=1}^{n}W_i\approx B_{n}$ can be regarded as an overdamped Langevin dynamics with step size $1$ to approximate the Brownian motion. The approximation error in each step remains  constant due to the unit step size.

	We first consider $\mathbb{E}[J_\beta]$. Substituting $\beta=\sigma\sqrt{eM}$ into $a=\beta/\sigma$ in Lemma~\ref{lem: brown motion}, we find $a=\sqrt{eM}$ is constant. So the distribution of $J$ in Lemma~\ref{lem: brown motion} does not change when $\sigma$ changes.  Thus, the approximation error in \eqref{eq: J} keeps invariant when $\sigma$ changes.
	
	Then, we consider \eqref{eq: sumJ}. $J_\beta$ is an approximation of $J$, and
	\begin{align}
		\label{eq:sumapprox}
		\sum_{j=1}^{J_\beta}S_j^2=\sigma^2\sum_{j=1}^{J_\beta}S_j^2/\sigma^2.
	\end{align}
	The distribution of $J$ does not change with $\sigma$, nor does the distribution of $J_\beta$. The terms $\frac{S_j}{\sigma}\sim\mathcal{N}(0,j)$ inside the sum in~\eqref{eq:sumapprox} are independent of $\sigma$. The distribution of $\sum_{j=1}^{J_\beta}S_j^2/\sigma^2$ does not change with $\sigma$. Thus, the approximation error  in  \eqref{eq: sumJ}  increases linearly with $\sigma^2$. 
	
	By Lemma~\ref{lem: brown motion}~(2), $\mathbb{E}[J^2] = 10\mathbb{E}[\int_0^TB_t^2dt]$. Recall that the approximation error  in  \eqref{eq: sumJ}  increases linearly with $\sigma^2$, thus the approximation error in $\mathbb{E}[J^2]$ also increases linearly with $\sigma^2$. 
	From \eqref{eq: L1 2}, the approximation error in $L^{EbT}(M)$ increases linearly with $\sigma^2$.

	\section{Unreliable Random Access Channels}\label{sec: generalization}
	
	In this section, we generalize our model to account for unreliability in random access channels, i.e., erasure channels. Related works such as \cite{ETCDG2021, AHWWCF2021} investigated Age of Information in unreliable channels, while optimal power allocation strategies in unreliable channel with respect to remote estimation has been considered in \cite{SCJJX2007}. However, in this section, we aim to minimize NAEE defined in \eqref{eq: E-MMSE} under oblivious and non-oblivious policies in unreliable channels.

	In the model defined in Section~\ref{sec:systemModel}, sensors can deliver packets successfully if no collisions happen. Now, we assume that packets are erased with some probability even if no collisions happen in the channel. In particular, suppose that if the channel is not in collision, the packet can be delivered with probability $1-\epsilon$, where $\epsilon$ is the channel erasure probability. We do not introduce another feedback, i.e., we assume that only collision feedback (not the full feedback) can be transmitted to sensors. Same as Section~\ref{sec:systemModel}, (active) sensors transmit packets through the slotted ALOHA algorithm \eqref{eq: slotted aloha}. For clarity of exposition, we assume that packets erasure happens in the end of every time slot (after channel collisions).

	From Section~\ref{sec: signal-dependent}, in the limit of $M$, the channel  throughput/rate is around $e^{-1}$ when $\epsilon=0$. Now, note that the channel erasure probability is $\epsilon$, which implies when no collisions occur, every packet chosen by slotted ALOHA \eqref{eq: slotted aloha} is delivered with probability $1-\epsilon$. Thus, the throughput/rate is around $e^{-1}(1-\epsilon)$. 
	In this section, we let $c(M)=e^{-1}(1-\epsilon)$.

	We first consider oblivious schemes. By a proof similar to that of Lemma~\ref{lem: Gaussian distribution}, \eqref{eq: Gaussian distribution} and \eqref{eq: obliviousL} still hold in our unreliable random access setting. Since the channel throughput/rate is around $e^{-1}(1-\epsilon)$, we use \cite[Theorem~5]{cxr2019} to obtain  the following NAAoI: 
	\begin{align}
		&\lim_{M\to\infty}J_{\epsilon}^{SAT}(M) = \frac{e}{2(1-\epsilon)}\label{eq: epsilon AoI}.
	\end{align}
	Note that \eqref{eq: Gaussian distribution} and \eqref{eq: obliviousL} still hold in erasure channels. Thus, the normalized average estimation error is computed by
	\begin{align}\label{eq: epsilon estimation}
		\lim_{M\to\infty}L_\epsilon^{SAT}(M) = \lim_{M\to\infty}L_\epsilon^{SAT}(M)\sigma^2=\frac{e\sigma^2}{2(1-\epsilon)}.
	\end{align}

	Now, we consider  non-oblivious schemes. The analysis in Section~\ref{sec: signal-dependent} can be generalized to yield the following theorem.
	\begin{theorem}\label{thm: optimalbeta2}
		Let $M$ be sufficient large. An optimal $\beta_{\epsilon}^*$ is approximately given by
		\begin{align*}
			\beta_{\epsilon}^* \approx \sigma\sqrt{eM/(1-\epsilon)},
		\end{align*}
		and
		\begin{align}\label{eq: estimate L1}
			\hat{L}_{\epsilon}^{EbT} = \frac{e}{6(1-\epsilon)}\sigma^2.
		\end{align}
	\end{theorem}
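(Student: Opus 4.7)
The plan is to mirror the proof of Theorem~\ref{thm: optimalbeta} step by step, with the single structural change that the effective per-slot channel throughput is now $c(M)=e^{-1}(1-\epsilon)$ rather than $e^{-1}$. The reasoning for this throughput adjustment is the one given in the paragraph preceding the theorem: conditioned on no collision (an event whose probability in the slotted-ALOHA regime asymptotes to $e^{-1}$ when the estimate of $\hat N(k)$ is tuned to the active-node count), a packet is additionally lost with probability $\epsilon$, independently of everything else. Hence the asymptotic throughput multiplies by $(1-\epsilon)$.

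First, I would re-examine the preliminaries of Section~\ref{sec: signal-dependent}, checking that the definitions of active nodes, of the silence delay $J_\beta$, and of the transmission delay $U_\beta$ do not need modification: $J_\beta$ depends only on the sample-path of the Wiener-like process and the threshold $\beta$, so it is unchanged; $U_\beta$ now accounts for both collisions and erasures, but stationarity and the law-of-large-numbers argument used to establish Lemma~\ref{lem: alpha} still apply because the erasures are i.i.d.\ and independent of the active-set dynamics. Under the natural analogues of Assumptions~\ref{assumption: independent} and~\ref{assumption: fully channel capacity} (with $\lambda_m\approx e^{-1}$ now corresponding to the no-collision sum rate, and $c(M)\approx e^{-1}(1-\epsilon)$), Lemma~\ref{lem EU} yields
\begin{align*}
\mathbb{E}[I_\beta]=\frac{M}{c(M)}=\frac{eM}{1-\epsilon},\qquad \mathbb{E}[U_\beta]=\frac{M\alpha_\beta}{c(M)}=o(M),
\end{align*}
and the closed-form expression in Lemma~\ref{lem: I delta SSN} for $L^{EbT}(M)$ remains valid, as do the decomposition \eqref{eq: L-delta1}--\eqref{eq: Lpi222} and the Brownian approximation in Lemma~\ref{lem: brown motion} (which is a statement purely about the process realization and thus is $\epsilon$-independent).

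Next I would substitute these updated quantities into \eqref{eq: L1 2}, obtaining the same intermediate form
\begin{align*}
\hat{L}_\epsilon^{EbT}(M)\approx\frac{\frac{M^2}{c(M)^2}-\frac{2\beta^4}{3\sigma^4}+\mathrm{Var}(U_\beta)}{2\frac{M^2}{c(M)}}\sigma^2,
\end{align*}
and then follow steps (ii) and (iii) from the roadmap of Theorem~\ref{thm: optimalbeta}. Step (ii) - showing that the $\mathrm{Var}(U_\beta)$ contribution is asymptotically negligible at the optimum - goes through verbatim because $U_\beta=o(M)$ in exactly the same sense. Step (iii) is just calculus: differentiating in $\beta$ and setting the derivative to zero yields $\beta_\epsilon^*\approx\sigma\sqrt{M/c(M)}=\sigma\sqrt{eM/(1-\epsilon)}$, and plugging back gives $\hat{L}_\epsilon^{EbT}=\frac{1}{6c(M)}\sigma^2=\frac{e}{6(1-\epsilon)}\sigma^2$.

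The main obstacle is not the minimization (which is algebraic) but rigorously transporting the preliminaries - specifically, re-verifying Assumptions~\ref{assumption: independent}, \ref{assumption: fully channel capacity} and the ``$c(M)\approx e^{-1}(1-\epsilon)$'' claim in the unreliable setting. The delicate point is that with erasures the sensors no longer receive a definitive ``success vs.\ collision'' signal for their own packets through the shared collision feedback $c(k)$ alone; an erased lone transmission is indistinguishable from idle at other sensors. I would argue that this does not perturb the asymptotic throughput because (a) the ALOHA update rule \eqref{eq: slotted aloha} uses $c(k)$ to track the backlog estimate $\hat N(k)$, and erasures leave $c(k)=0$ just like idle slots do, so $\hat N(k)$ still converges to the true active count up to $O(1)$; and (b) a transmitting node privately observes its own delivery indicator $d_i(k)$, which suffices for it to know when to deactivate. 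Once this stabilization is in hand, the rest of the proof is a direct transcription of the Theorem~\ref{thm: optimalbeta} argument with $e^{-1}$ replaced by $e^{-1}(1-\epsilon)$.
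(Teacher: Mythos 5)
Your proposal matches the paper's own proof, which consists precisely of the observation that the argument for Theorem~\ref{thm: optimalbeta} carries over verbatim once $c(M)\approx e^{-1}$ is replaced by $c(M)\approx e^{-1}(1-\epsilon)$. Your additional care in re-verifying the preliminaries and in addressing the feedback subtlety (that an erased lone transmission is indistinguishable from an idle slot at other sensors) goes beyond what the paper writes down, but it supports rather than alters the same route.
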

	\begin{remark}
		The new threshold in 
		Theorem~\ref{thm: optimalbeta2} is larger than that in Theorem~\ref{thm: optimalbeta}, i.e., $\beta_{\epsilon}^*\geq\beta^*$ for $0\leq\epsilon<1$. The expected number of newly active nodes is reduced. This is because (i) if a packet is erased, then the corresponding sensor is still active in the next time slot; (ii)  the channel throughput/rate is decreases to around $e^{-1}(1-\epsilon)$, not $e^{-1}$.
	\end{remark}
	\begin{remark}\label{remark: rationepsilon}
		Comparing \eqref{eq: epsilon estimation} and \eqref{eq: estimate L1}, we still have $L_\epsilon^{SAT}/\hat{L}_{\epsilon}^{EbT} \approx 3$ in erasure channels.
	\end{remark}
	\begin{proof}
		The proof of Theorem~\ref{thm: optimalbeta2} is the similar to that of Theorem~\ref{thm: optimalbeta}. The only difference is replacing $c(M)\approx e^{-1}$ with $c(M)\approx e^{-1}(1-\epsilon)$.
	\end{proof}

	\section{Numerical Results}\label{sec: numerical results}
	In this section, we verify our findings through simulations. 
	Figure~\ref{fig-age2} compares the NAEE of our proposed policy with the state of the art for $M=500$ under different $\sigma^2$. In this plot, the green (plus) curve corresponds to an optimal stationary randomized policy in which each node transmits with an optimal pre-determined probability. The performance of threshold policies like \cite{XZMMVWCUM2021, GT2012} that impose an optimal (fixed) transmission rate for each sensor also coincides with this curve, i.e, the green (plus) one. These policies do not exploit the available feedback for decision making. The purple (diamond) curve shows the performance of a standard  pseudo-Bayesian slotted ALOHA. Slotted ALOHA does use feedback, but treats all packets similarly, independent of their corresponding sample values. The red  (circle) and blue (squared) curves correspond to oblivious (age-based) policies \cite[Algorithm 1]{cxr2019} and \cite[Algorithm 2]{cxr2019}, respectively. The black (star) curve shows the performance of our proposed decentralized policy in Algorithm \ref{alg: Limit Threshold-slotted ALOHA} and the red ({\tt x}) curve shows the approximation we find in \eqref{eq: estimate L}. The gap between the two is small but increases linearly in $\sigma^2$ as discussed in Section~\ref{sec: Approximating Error Analysis}. On this plot, we have also included an oblivious and a non-oblivious centralized policy. 
	The  former  (green dashed  curve)  schedules  the  transmitter  with  the  largest  age and based on \cite[Section~III]{I.Kadota-2018} and Proposition~\ref{thm: MW policy} is optimal in the class of oblivious policies. The centralized non-oblivious policy that we have considered here  (yellow smooth curve) schedules the transmitter with  the largest  estimation error. Both centralized oblivious and non-oblivious policies are often observed to be numerically very close to the optimal.
	
\begin{figure*}[ht!]
\centering
\begin{subfigure}[b]{0.4\textwidth}
	\centering
	\begin{tikzpicture}[scale=0.6]
		\begin{axis}
			[axis lines=left,
			width=2.9in,
			height=2.6in,
			scale only axis,
			xlabel=$\sigma^2$,
			ylabel=NAEE,
			xmin=1, xmax=5,
			ymin=0, ymax=16,
			xtick={},
			ytick={},
			ymajorgrids=true,
			legend style={at={(0.01,1.08)},anchor=west},
			grid style=dashed,
			scatter/classes={
				a={mark=+, draw=black},
				b={mark=star, draw=black}
			}
			]

			\addplot[color=green, mark=+, thick]
			coordinates{(1,2.7024)(1.2,3.2470)(1.4,3.8018)(1.6,4.3158)(1.8,4.8816)(2.0,5.4119)(2.2,5.9531)(2.4,6.4997)(2.6,7.0363)(2.8,7.5872)(3.0,8.0906)(3.2,8.6785)(3.4,9.2475)(3.6,9.7877)(3.8,10.3007)(4.0,10.8310)(4.2,11.3364)(4.4,11.9222)(4.6,12.4251)(4.8,12.9751)(5.0,13.4665)
			};
			

			\addplot[color=blue, mark=square,thick]
			coordinates{(1,1.3603)(1.2,1.6308)(1.4,1.9040)(1.6,2.1729)(1.8,2.4474)(2.0,2.7182)(2.2,2.9900)(2.4,3.2616)(2.6,3.5315)(2.8,3.8067)(3.0,4.0777)(3.2,4.3461)(3.4,4.6200)(3.6,4.8974)(3.8,5.1649)(4.0,5.4322)(4.2,5.7129)(4.4,5.9783)(4.6,6.2529)(4.8,6.5250)(5.0,6.7944)
			};

			\addplot[color=red, mark=o, thick]
			coordinates{(1,1.0002)(1.2,1.2004)(1.4,1.4014)(1.6,1.6007)(1.8,1.8014)(2.0,1.9980)(2.2,2.2031)(2.4,2.4011)(2.6,2.6031)(2.8,2.7994)(3.0,3.0004)(3.2,3.2036)(3.4,3.4020)(3.6,3.6014)(3.8,3.8007)(4.0,4.0002)(4.2,4.19967)(4.4,4.4055)(4.6,4.6035)(4.8,4.8056)(5.0,5.0051)
			};

			\addplot[color=black, mark=star,thick]
			coordinates{(1,0.4994)(1.2,0.5987)(1.4,0.6991)(1.6,0.7988)(1.8,0.8981)(2.0,0.9980)(2.2,1.0983)(2.4,1.1981)(2.6,1.2982)(2.8,1.3969)(3.0,1.4970)(3.2,1.5964)(3.4,1.6977)(3.6,1.7980)(3.8,1.8962)(4.0,1.9962)(4.2,2.0958)(4.4,2.1973)(4.6,2.2959)(4.8,2.3969)(5.0,2.4956)
			};

			\addplot[color=green,  dashed, ultra thick]
			coordinates{(1,0.4910)(1.2,0.5910)(1.4,0.6915)(1.6,0.7918)(1.8,0.8916)(2.0,0.9910)(2.2,1.0928)(2.4,1.1908)(2.6,1.2929)(2.8,1.3924)(3.0,1.4930)(3.2,1.5933)(3.4,1.6936)(3.6,1.7917)(3.8,1.8932)(4.0,1.9947)(4.2,2.0932)(4.4,2.1956)(4.6,2.2952)(4.8,2.3952)(5.0,2.4956)
			};
			
			\addplot[color=yellow, smooth ,ultra thick]
			coordinates{(1,0.4910)(1.2,0.5810)(1.4,0.6815)(1.6,0.7818)(1.8,0.8816)(2.0,0.9810)(2.2,1.0828)(2.4,1.1808)(2.6,1.2829)(2.8,1.3824)(3.0,1.4830)(3.2,1.5833)(3.4,1.6836)(3.6,1.7817)(3.8,1.8832)(4.0,1.9847)(4.2,2.0832)(4.4,2.1856)(4.6,2.2852)(4.8,2.3852)(5.0,2.4856)
			};

			\addplot[color=red, mark=x, thick]
			coordinates{(1,0.4550)(1.2,0.5461)(1.4,0.6371)(1.6,0.7281)(1.8,0.8191)(2.0,0.9101)(2.2,1.0011)(2.4,1.0921)(2.6,1.1831)(2.8,1.2741)(3.0,1.3651)(3.2,1.4562)(3.4,1.5472)(3.6,1.6382)(3.8,1.7292)(4.0,1.8202)(4.2,1.9112)(4.4,2.0022)(4.6,2.0932)(4.8,2.1842)(5.0,2.2752)
			};

			\legend{ Optimal Stationary Randomized Policy,  Stationary Age-based Thinning in \cite{cxr2019}, Adaptive Age-based Thinning in \cite{cxr2019},  Error-based Thinning, Oblivious MW Policy,  Non-oblivious Greedy Policy, Estimated $L^{EbT}$ in \eqref{eq: estimate L}, }

		\end{axis}
	\end{tikzpicture}
\caption{NAEE as a function of $\sigma^2$ for various state-of-the-art schemes with $M=500$.}
\label{fig-age2}
\end{subfigure}
\begin{subfigure}[b]{0.4\textwidth}
	\centering
	\begin{tikzpicture}[scale=0.6]
		\begin{axis}
			[axis lines=left,
			width=2.9in,
			height=2.6in,
			scale only axis,
			xlabel=$\epsilon$,
			ylabel=NAEE,
			xmin=0, xmax=0.72,
			ymin=0, ymax=29,
			xtick={},
			ytick={},
			ymajorgrids=true,
			legend style={at={(0.01,1.05)},anchor=west},
			grid style=dashed,
			scatter/classes={
				a={mark=+, draw=black},
				b={mark=star, draw=black}
			}
			]

			\addplot[color=green, mark=+, thick]
			coordinates{(0,8.2208)(0.1,9.0395)(0.2,9.9613)(0.3,11.5209)(0.4,13.3807)(0.5,16.0379)(0.6,19.9634)(0.7,27.5479)
			};

			\addplot[color=blue, mark=square,thick]
			coordinates{(0, 4.079963)(0.1,4.611185)(0.2,5.212722)(0.3,5.953863)(0.4,6.949594)(.5,8.406248)(.6,10.504403)(.7,13.988824)
			};


			\addplot[color=black, mark=star,thick]
			coordinates{(0, 1.527396)(0.1, 1.688883)(0.2, 1.913276)(0.3,2.180743)(0.4, 2.564661)(.5,3.081347)(.6, 3.865449)(.7,5.184445)
			};

			\addplot[color=green,  dashed, ultra thick]
			coordinates{(0,1.4973)(0.1,1.6683)(0.2,1.8655)(0.3,2.1286)(0.4,2.4881)(0.5,2.9951)(0.6,3.7638)(0.7,4.9685)
			};
			
			\addplot[color=yellow, smooth ,ultra thick]
			coordinates{(0,1.4342)(0.1,1.6263)(0.2,1.8390)(0.3,2.1167)(0.4,2.4321)(0.5,2.8962)(0.6,3.7277)(0.7,4.9354)
			};

			\addplot[color=red, mark=x, thick]
			coordinates{(0,1.3591)(0.1,1.5102)(0.2,1.6989)(0.3,1.9416)(0.4,2.2652)(.5,2.7183)(.6,3.3979)(.7,4.5305)
			};

			\legend{Optimal Stationary Randomized Policy, Stationary Age-based Thinning in \cite{cxr2019},  Error-based Thinning, Oblivious MW Policy,  Non-oblivious Greedy Policy, Estimated $L_1^{EbT}$ in \eqref{eq: estimate L1}, }

		\end{axis}
	\end{tikzpicture}
\caption{NAEE as a function of $\epsilon$ for various state-of-the-art schemes with $M=500$.}
\label{fig-epsilon}
\end{subfigure}
\caption{NAEE of state-of-the-art schemes in reliable and unreliable channels}
\end{figure*}
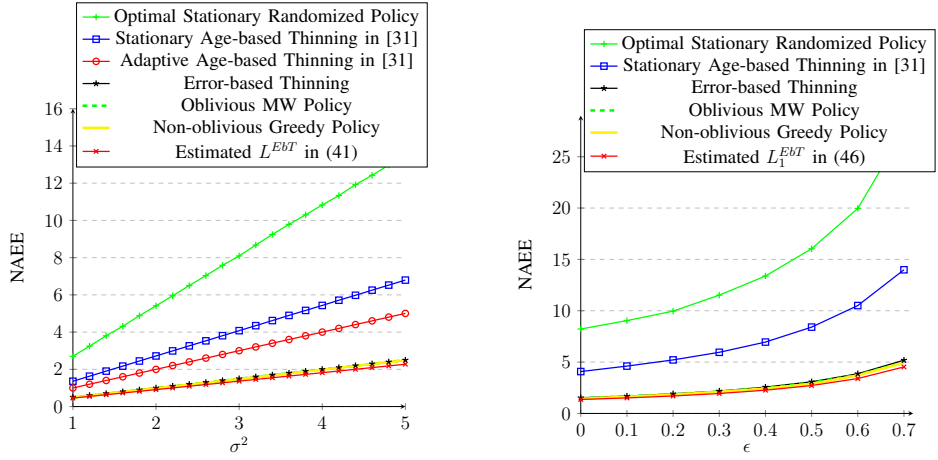	
	
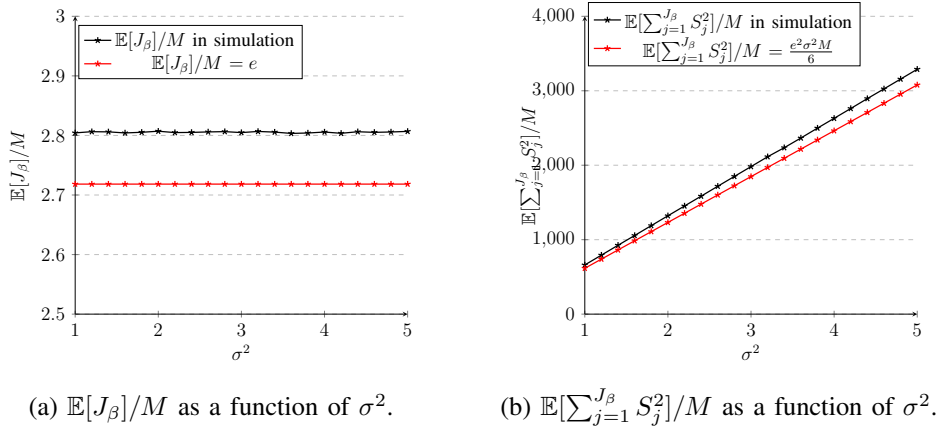
\begin{figure*}[ht!]
\centering
\begin{subfigure}[b]{0.4\textwidth}
	\centering
	\begin{tikzpicture}[scale=0.6]
		\begin{axis}
			[axis lines=left,
			width=2.9in,
			height=2.6in,
			scale only axis,
			xlabel=$\sigma^2$,
			ylabel=$\mathbb{E}{[J_\beta]}/M$ ,
			xmin=1, xmax=5,
			ymin=2.5, ymax=3,
			xtick={},
			ytick={},
			ymajorgrids=true,
			legend style={at={(0.01,0.88)},anchor=west},
			grid style=dashed,
			scatter/classes={
				a={mark=+, draw=black},
				b={mark=star, draw=black}
			}
			]

			\addplot[color=black, mark=star,thick]
			coordinates{(1,2.8043)(1.2,2.8063)(1.4,2.8060)(1.6,2.8042)(1.8,2.8053)(2.0,2.8071)(2.2,2.8048)(2.4,2.8051)(2.6,2.8059)(2.8,2.8064)(3.0,2.8048)(3.2,2.8067)(3.4,2.8056)(3.6,2.8034)(3.8,2.8041)(4.0,2.8057)(4.2,2.8036)(4.4,2.8062)(4.6,2.8052)(4.8,2.8058)(5.0,2.8068)
			};
			
			\addplot[color=red, mark=star,thick]
			coordinates{(1,2.7183)(1.2,2.7183)(1.4,2.7183)(1.6,2.7183)(1.8,2.7183)(2.0,2.7183)(2.2,2.7183)(2.4,2.7183)(2.6,2.7183)(2.8,2.7183)(3.0,2.7183)(3.2,2.7183)(3.4,2.7183)(3.6,2.7183)(3.8,2.7183)(4.0,2.7183)(4.2,2.7183)(4.4,2.7183)(4.6,2.7183)(4.8,2.7183)(5.0,2.7183)
			};

			\legend{$\mathbb{E}[J_\beta]/M$ in simulation, $\mathbb{E}[J_\beta]/M$ $=e$}
		\end{axis}
	\end{tikzpicture}
\caption{$\mathbb{E}[J_\beta]/M$ as a function of $\sigma^2$.}
\label{fig-age3}
\end{subfigure}
\begin{subfigure}[b]{0.4\textwidth}
	\centering
	\begin{tikzpicture}[scale=0.6]
		\begin{axis}
			[axis lines=left,
			width=2.9in,
			height=2.6in,
			scale only axis,
			xlabel=$\sigma^2$,
			ylabel=$\mathbb{E}{[\sum_{j=1}^{J_\beta}S_j^2]}/M$,
			xmin=1, xmax=5,
			ymin=0, ymax=4000,
			xtick={},
			ytick={},
			ymajorgrids=true,
			legend style={at={(0.01,0.94)},anchor=west},
			grid style=dashed,
			scatter/classes={
				a={mark=+, draw=black},
				b={mark=star, draw=black}
			}
			]

			\addplot[color=black, mark=star,thick]
			coordinates{(1,660.3670)(1.2,792.5328)(1.4,925.2312)(1.6,1056.1568)(1.8,1188.8486)(2.0,1320.9698)(2.2,1452.5291)(2.4,1584.7369)(2.6,1717.6587)(2.8,1849.6303)(3.0,1981.6395)(3.2,2113.8634)(3.4,2235.5457)(3.6,2365.6251)(3.8,2498.6242)(4.0,2629.8838)(4.2,2761.7586)(4.4,2895.0882)(4.6,3025.4658)(4.8,3157.2522)(5.0,3289.1200)
			};
			
			\addplot[color=red, mark=star,thick]
			coordinates{(1,615.8)(1.2,738.9)(1.4,862.1)(1.6,985.2)(1.8,1108.4)(2.0,1231.5)(2.2,1354.7)(2.4,1477.8)(2.6,1601.0)(2.8,1724.1)(3.0,1847.3)(3.2,1970.4)(3.4,2093.6)(3.6,2216.7)(3.8,2339.9)(4.0,2463.0)(4.2,2586.2)(4.4,2709.3)(4.6,2832.5)(4.8,2955.6)(5.0,3078.8)
			};

			\legend{$\mathbb{E}[\sum_{j=1}^{J_\beta}S_j^2]/M$ in simulation, $\mathbb{E}[\sum_{j=1}^{J_\beta}S_j^2]/M$ $=\frac{e^2\sigma^2M}{6}$}
		\end{axis}
	\end{tikzpicture}
\label{fig-age4}
\end{subfigure}
\caption{The approximation errors of for $\mathbb{E}[J_\beta]/M$  and $\mathbb{E}[\sum_{j=1}^{J_\beta}S_j^2]/M$ when $M=500$}
\end{figure*}

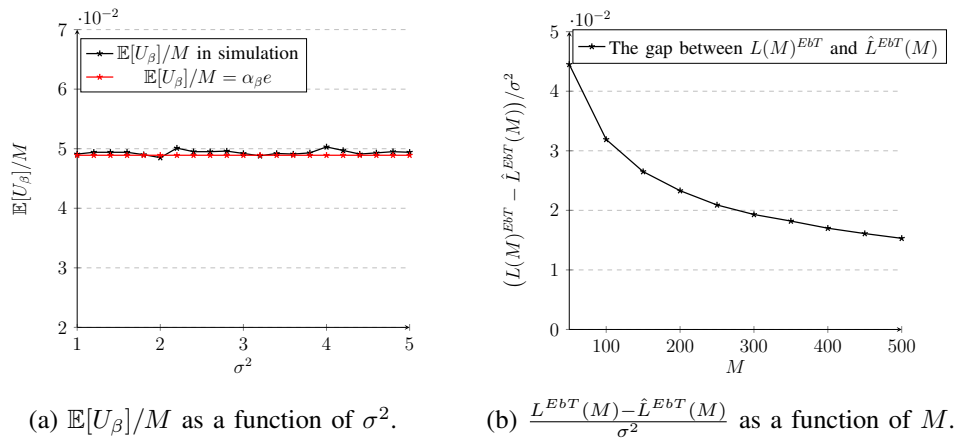
\begin{figure*}[ht!]
\centering
\begin{subfigure}[b]{0.4\textwidth}
	\centering
	\begin{tikzpicture}[scale=0.6]
		\begin{axis}
			[axis lines=left,
			width=2.9in,
			height=2.6in,
			scale only axis,
			xlabel=$\sigma^2$,
			ylabel=${\mathbb{E}[U_\beta]}/M$,
			xmin=1, xmax=5,
			ymin=0.02, ymax=0.07,
			xtick={},
			ytick={},
			ymajorgrids=true,
			legend style={at={(0.01,0.88)},anchor=west},
			grid style=dashed,
			scatter/classes={
				a={mark=+, draw=black},
				b={mark=star, draw=black}
			}
			]

			\addplot[color=black, mark=star,thick]
			coordinates{(1,0.0491)(1.2,0.0494)(1.4,0.0494)(1.6,0.0494)(1.8,0.0490)(2.0,0.0485)(2.2,0.0501)(2.4,0.0495)(2.6,0.0495)(2.8,0.0496)(3.0,0.0492)(3.2,0.0488)(3.4,0.0492)(3.6,0.0491)(3.8,0.0493)(4.0,0.0503)(4.2,0.0497)(4.4,0.0491)(4.6,0.0493)(4.8,0.0495)(5.0,0.0494)
			};
			
			\addplot[color=red, mark=star,thick]
			coordinates{(1,0.0489)(1.2,0.0489)(1.4,0.0489)(1.6,0.0489)(1.8,0.0489)(2.0,0.0489)(2.2,0.0489)(2.4,0.0489)(2.6,0.0489)(2.8,0.0489)(3.0,0.0489)(3.2,0.0489)(3.4,0.0489)(3.6,0.0489)(3.8,0.0489)(4.0,0.0489)(4.2,0.0489)(4.4,0.0489)(4.6,0.0489)(4.8,0.0489)(5.0,0.0489)
			};

			\legend{$\mathbb{E}[U_\beta]/M$ in simulation, $\mathbb{E}[U_\beta]/M=\alpha_\beta e$}
		\end{axis}
	\end{tikzpicture}
\caption{$\mathbb{E}[U_\beta]/M$ as a function of $\sigma^2$.}
\label{fig-age5}
\end{subfigure}
\begin{subfigure}[b]{0.4\textwidth}
	\centering
	\begin{tikzpicture}[scale=0.6]
		\begin{axis}
			[axis lines=left,
			width=2.9in,
			height=2.6in,
			scale only axis,
			xlabel=$M$,
			ylabel=$\big(L(M)^{EbT}-\hat{L}^{EbT}(M)\big)/\sigma^2$,
			xmin=50, xmax=500,
			ymin=0, ymax=0.05,
			xtick={},
			ytick={},
			ymajorgrids=true,
			legend style={at={(0.01,0.95)},anchor=west},
			grid style=dashed,
			scatter/classes={
				a={mark=+, draw=black},
				b={mark=star, draw=black}
			}
			]

			\addplot[color=black, mark=star,thick]
			coordinates{(50,0.0445)(100,0.0319)(150,0.0265)(200,0.0233)(250,0.0209)(300,0.0193)(350,0.0182)(400,0.0170)(450,0.0161)(500,0.0153)
			};

			\legend{The gap between $L(M)^{EbT}$ and $\hat{L}^{EbT}(M)$}
		\end{axis}
		\end{tikzpicture}
\caption{$\frac{L^{EbT}(M)-\hat{L}^{EbT}(M)}{\sigma^2}$ as a function of $M$.}
\label{fig-age6}	
\end{subfigure}
\caption{The approximation errors of for $\mathbb{E}[U_\beta]/M$ when $M=500$ (left). The gap  (normalized by $\sigma^2$) between $L^{EbT}(M)$ and $\hat{L}^{EbT}(M)$ as a function of $M$ for $\sigma^2=3$ (right).}
\end{figure*}

	The numerical calculation and analytical approximation of $\mathbb{E}[J_\beta]$, $\mathbb{E}[\sum_{j=1}^{J_\beta}S_j^2]$ and $\mathbb{E}[U_\beta]$ are given in Figure~\ref{fig-age3}, Figure~\ref{fig-age4} and Figure~\ref{fig-age5}, respectively. Recall that $\mathbb{E}[J_\beta^2]$ is $10$ times $\mathbb{E}\big[\sum_{j=1}^{J_\beta}S_j^2\big]$, so we only consider one of them. 
	In order  to offset the effect introduced by the number of nodes, we consider  the normalized silence delay $\mathbb{E}[J_\beta]/M$, the normalized transmission delay $\mathbb{E}[U_\beta]/M$, and  $\mathbb{E}[\sum_{j=1}^{J_\beta}S_j^2]/M$. The estimation error of the normalized silence delay $\mathbb{E}[J_\beta]/M$ is invariant of $\sigma^2$ (Figure~\ref{fig-age3}), while the estimation error of $\mathbb{E}[\sum_{j=1}^{J_\beta}S_j^2]/M$ increases linearly with $\sigma^2$ (Figure~\ref{fig-age4}). This coincides with the analysis in Section~\ref{sec: Approximating Error Analysis}. In the simulation, we numerically find the expected fraction of active nodes to be $\alpha_\beta = 0.0173$. Substituting $\alpha_\beta = 0.0173$ into \eqref{eq: U}, we  get $\mathbb{E}[U_\beta]$. From Figure~\ref{fig-age5}, we can see that normalized transmission delay $\mathbb{E}[U_\beta]$ coincides with analytical results in \eqref{eq: U}.

	Next, we show in Figure~\ref{fig-age6} that the gap between $L^{EbT}(M)$ and $\hat{L}^{EbT}(M)$ decreases as $M$ gets large. In other words, the influence of approximation error caused by Langevin dynamics in Algorithm \ref{alg: Limit Threshold-slotted ALOHA}  weakens (but does not vanish) as $M$ increases.

	Finally, we show the performances of different policies in unreliable random access channels with erasure probability~$\epsilon$. 
	Figure~\ref{fig-epsilon} compares the NAEE of our proposed policy with the state of the art for $M=500$ under different $\epsilon$ and $\sigma^2=3$. NAEE increases with $\epsilon$ under any policy. This is because more packets are erased when channel erasure probability is larger, hence a larger estimation error is occurred. In addition, the proposed algorithm, i.e., the EbT policy outperforms all other policies.

	\section{Conclusion and Future Work}\label{sec: Conclusion and Future Work}
	We considered the problem of real-time sampling and timely estimation over wireless collision channels with $M$ independent and statistically identical Gauss-Markov processes (sources). We studied a normalized metric of estimation error which we  termed the normalized average estimation error (NAEE), and focused on the regime of large $M$. We proposed two general classes of policies: oblivious policies and non-oblivious policies. We showed in the former class that minimizing the expected estimation error  is equivalent to minimizing the expected age and consequently provided lower and upper bounds on the optimal estimation error. We then proposed and analyzed a (non-oblivious) threshold policy in which (1) nodes become active  if their estimation error has crossed a threshold and (2) active nodes  transmit stochastically with probabilities that adapt to the state of the channel (exploiting the collision feedback).
	We showed that the NAEE performance of oblivious (age-based) policies is at least twice better than the state-of-the-art schemes (which impose a fixed rate of transmission at the nodes) such as standard slotted ALOHA and optimal stationary randomized policy. Moreover, our proposed threshold policy offers a multiplicative gain close to $3$ compared to oblivious policies. 
	Finally, we extended  our framework to incorporate unreliabile random access channels with  erasure probability $\epsilon$. The proposed optimal threshold and the corresponding NAEE increase with $\epsilon$.  Numerical results show that the multiplicative gain is $3$ and independent of $\epsilon$ (which is consistent with Remark~\ref{remark: rationepsilon}), and the additive gain (offered by  non-oblivious policies  compared to oblivious policies) increases with $\epsilon$.
	Our findings suggest that designing optimal multiple access systems for the future IoT and CPS applications requires going beyond traditional metrics of rate, reliability, latency, and age.

	Future research includes generalizations to accommodate the following scenarios: 1) dynamic networks, i.e., the number of sensors changes with time; 2) asymmetric networks, i.e., the sensors are no longer statistically identical; 3) adaptive error-based thinning policies, i.e., the threshold $\beta(k)$ changes with time $k$; 4) correlated sources, i.e., sensors are no longer mutually independent. For the first scenario,  we can simply replace $M$ with $M(k)$  in every time slot. Subsequently, the error-based threshold  is also a time-variant variable, $\beta(k)$. For the remaining three scenarios, the method we have proposed can not be applied directly. In particular, in the second scenario,  we used the profile of all the sources to find an estimate on an individual source. This step fails in asymmetric settings. In the third scenario, the nodes need statistical inference about the distribution of error process $\{\psi_i(k)\}_i$ to decide which ones are of priority.  In the fourth scenario, the policies should change to account for the correlation between the observations.

	\footnotesize
	\bibliographystyle{IEEEtran}
	\bibliography{references}

\begin{appendices}
\section{Proof of Proposition~\ref{thm: MW policy}}\label{App: Proof MW}
Recall that the proposed policy is oblivious to the monitored process. So $W_i(j)$'s are independent of $h_i(k)$.	Using \eqref{eq: Gaussian distribution},  \eqref{eq: Lyapunov}, and \eqref{eq: L Drift}, we write
\begin{align}
LD(k) =\mathbb{E}[L(k+1) -L(k)\big|\underline{T}^{(i)}(k)]=\frac{\sigma^2}{M}\sum_{i=1}^{M}\mathbb{E}\big[h_i(k+1) - h_i(k)\big].\label{eq: proof MW1}
		\end{align}
		Moreover, the age functions have the following recursion:
		\begin{align}\label{eq: proof MW2}
			h_i(k+1)=d_i(k)+(1-d_i(k))(h_i(k)+1).
		\end{align}
		where $d_i(k)\in\{0,1\}$ indicates  a successful delivery from source $i$ at time $k$. Note  $\sum_{i=1}^{M}d_i(k)=1$. Under the MW policy, no collisions occur in every time slot, so $h_i(k)$ is a scalar (not a random variable) for all $i, k$.
		Substituting $h_i(k+1)$ from \eqref{eq: proof MW2} into \eqref{eq: proof MW1}, we obtain
		\begin{align*}
			LD(k)=\frac{\sigma^2}{M}\sum_{i=1}^{M}\Big(1- h_i(k)d_i(k)\Big).
		\end{align*}
		Thus, minimizing $LD(k)$ is equivalent to choosing $i^*$ such that $h_{i^*}(k)=\max_i h_i(k)$.

		Since we assumed $h_i(0)=1$ for all nodes, from Lemma~2 in \cite[Section~III]{I.Kadota-2018}, the above MW policy  is equivalent to  a Round-Robin policy. Consequently, for all $i=1,\ldots,M$, and $k\geq i$, we get $h_i(k)=1,2,\cdots,M$  successively and periodically, and
		\begin{align*}
			\lim_{K\rightarrow\infty}\frac{1}{K}\sum_{i=1}^{M}h_i(k)=\frac{M(M+1)}{2}.
		\end{align*}
		Therefore,
		\begin{align*}
			\lim_{M\rightarrow\infty}L^{MW}(M)=\lim_{M\rightarrow\infty}\frac{\sigma^2}{M^2}\frac{M(M+1)}{2}=\frac{\sigma^2}{2}.
		\end{align*}

		\section{The Strong Law of Large Numbers holds for $\{I_\ell^{(i)}\}_\ell$}\label{App: proof of I LLN}
		From Definition~\ref{def: active nodes} and Definition~\ref{def: silience delay}, $I_\ell^{(i)}=J_\ell^{(i)}+U_\ell^{(i)}-1$, and $J_\ell^{(i)}$ is measurable and independent of $U_\ell^{(i)}$. Consider $I_1^{(i)}$ and $I_m^{(i)}$, $m\geq1$. $J_1^{(i)}$ is independent of $J_m^{(i)}$, $U_1^{(i)}$ and $U_m^{(i)}$. Then
\begin{align*}
\mathbb{E}[I_1^{(i)}I_m^{(i)}]-\mathbb{E}[I_1^{(i)}]\mathbb{E}[I_m^{(i)}] = \mathbb{E}[U_1^{(i)}U_m^{(i)}]-\mathbb{E}[U_1^{(i)}]\mathbb{E}[U_m^{(i)}]
\end{align*}
		which implies the correlation between $I_1^{(i)}$ and $I_m^{(i)}$ is the same as the correlation between $U_1^{(i)}$ and $U_m^{(i)}$.

		Now we consider the correlation between $U_1^{(i)}$ and $U_m^{(i)}$.  We first claim that the Markov process $S(k)=\big(N(k), \hat{N}(k)\big)$ is geometrically ergodic \cite{JT1987}. In fact, by Assumption~\ref{assumption: fully channel capacity}, the system is stabilized. Note that we set $\hat{\lambda}(k)=e^{-1}$ in \eqref{eq: slotted aloha} and $\lambda(k)<e^{-1}$ for all $k$. Since $\lambda_m=\limsup_k\lambda(k)<e^{-1} =\hat{\lambda}(k)$. From \cite[Theorem~3.1 and Section~IV]{JT1987}, Markov process $S(k)$ is geometrically ergodic.
		Define the state space of  $S(k)$ as $\mathcal{P}$. For any $i,j\in\mathcal{P}$, define $p_{ij}(k)=[P(k)]_{ij}$ and $\Pi=[\pi_i]_i$ as the transition probability in time slot $k$ and the stationary distribution, respectively.  A Markov chain is geometrically ergodic \cite{Spieksma1990} if there are $\rho<1$ and $C<\infty$ such that for all $i,j,k$
		\begin{align}\label{eq: geometrically ergodic}
			|p_{ij}(k)-\pi_j|\leq C\rho^k.
		\end{align}
		From \eqref{eq: geometrically ergodic}, in the limit of $k$, the transition probability equals to the stationary distribution, i.e., $\lim_{k\rightarrow\infty}p_{ij}(k)=\pi_j$ for any $i, j\in\mathcal{P}$.
		
Now, we consider $U_1^{(i)}=n$ and $U_m^{(i)}=l$. 
\begin{equation}\label{eq: EU_1U_m}
\begin{aligned}
\Pr(U_1^{(i)}=n, U_m^{(i)}=l) =\Pr(U_1^{(i)}=n)\Pr(U_m^{(i)}=l|U_1^{(i)}=n)
\end{aligned}
\end{equation}
		Define number of  the time slots between $U_1^{(i)}$ and $U_m^{(i)}$ as $m'$, $m'\geq m$. Define the states of $S(k)$ just before and after $U_1^{(i)}$ as $s_1$ and $s_2$. Define the state of $S(k)$ just before $U_m^{(i)}$ as $s_m$. In the following steps, we use $\pi_{s_i}$ and $\Pr(s_i)$ interchangeably.
		Then, due to the Markovity of $S(k)$,
		\begin{equation*}
			\begin{aligned}
			&\Pr\big(U_1^{(i)}=n, U_m^{(i)}=l\big)=\Pr(U_1^{(i)}=n)\Pr(U_m^{(i)}=l|U_1^{(i)}=n)\\
				=&\sum_{s_1,s_2,s_m\in\mathcal{P}}\Pr(s_1)\Pr(U_1^{(i)}=n|s_1)\Pr(s_2|U_1^{(i)}=n,s_1)\times\Pr(s_m|s_2)\Pr(U_m^{(i)}=l|s_m).
			\end{aligned}
		\end{equation*}
		From \eqref{eq: geometrically ergodic}, 
		\begin{align}\label{eq: pijpi}
			p_{ij}(k)=\pi_j+\epsilon_{ij}(k)
		\end{align}
		where $|\epsilon_{ij}(k)|\leq C\rho^k$ for all $i,j\in\mathcal{P}$. Note that the number of time slot between $U_1^{(i)}$ and $U_m^{(i)}$ is $m'$. By the definition of transition probabilities, 
		\begin{align*}
			\Pr(s_m|s_2)=&\sum_{s_{m-1}}p_{s_{m-1}s_m }\Pr(s_{m-1}|s_2).
		\end{align*}
		Let $\epsilon(m') = \max_{s_{m-1},s_m\in\mathcal{P}}|\epsilon_{s_{m-1}s_m}(m')|$.
		Then,
		\begin{align*}
			\Pr(s_m|s_2)\leq \big(\pi_{s_m}+\epsilon(m')\big)\sum_{s_{m-1}}\Pr(s_{m-1}|s_2)\leq\pi_{s_m}+\epsilon(m').
		\end{align*}
		Thus,
		\begin{align*}
	\Pr(U_m^{(i)}=l|U_1^{(i)}=n) \leq\sum_{s_m\in\mathcal{P}}\big(\pi_{s_m}+\epsilon(m')\big)\Pr(U_m^{(i)}=l|s_m).
		\end{align*}
		Consider the stationary distribution $\Pi$, define $$\delta=\min_i\{\pi_i>0\},$$ $\delta$ is a constant depending on the stationary distribution, hence the number of nodes $M$. Then,
		\begin{align*}
&\Pr(U_m^{(i)}=l|U_1^{(i)}=n) \leq\sum_{s_m\in\mathcal{P}}\big(\pi_{s_m}+\epsilon(m')\big)\Pr(U_m^{(i)}=l|s_m)\leq\Pr(U_m^{(i)}=l)+\epsilon(m')\sum_{s_m\in\mathcal{P}}\frac{\pi_{s_m}}{\delta}\Pr(U_m^{(i)}=l|s_m)\\
&=\Pr(U_m^{(i)}=l)+\frac{\epsilon(m')}{\delta}\sum_{s_m\in\mathcal{P}}\Pr(s_m)\Pr(U_m^{(i)}=l|s_m)=\Pr(U_m^{(i)}=l)\Big(1+\frac{\epsilon(m')}{\delta}\Big).
		\end{align*}
		Therefore,
		\begin{align*}
		\mathbb{E}[U_1^{(i)}U_m^{(i)}]\leq\big(1+\frac{\epsilon(m')}{\delta}\big)\sum_n n\Pr(U_1^{(i)}=n)\cdot\sum_l l\Pr(U_m^{(i)}=l)\leq\big(1+\frac{\epsilon(m')}{\delta}\big)\mathbb{E}[U_1^{(i)}]\mathbb{E}[U_m^{(i)}]
		\end{align*}
		Note that $\epsilon(m')\leq C\rho^{m'}$, so 
		$$
		\mathbb{E}[U_1^{(i)}U_m^{(i)}]-\mathbb{E}[U_1^{(i)}]\mathbb{E}[U_m^{(i)}]\leq \mathbb{E}[U_1^{(i)}]\mathbb{E}[U_m^{(i)}]\frac{C}{\delta}\rho^{m'}\leq C'\rho^m.
		$$
		The last equality holds because $m'\geq m$ and $\rho<1$.

		\section{Proof of Lemma~\ref{lem: alpha} }\label{App: proof of alpha} 
		Continuing from \eqref{eq:getridofi}, we have
\begin{equation}
\small
\begin{aligned}
\alpha_\beta=\mathbb{E}\left[\lim_{K\to\infty} \frac{1}{MK}\sum_{k=1}^K\sum_{i=1}^M \mathbf{1}(\text{node $i$ is active at time $k$})\right]= \mathbb{E}\left[\frac{1}{M}\sum_{i=1}^M p_a^{(i)}\right]\label{eq:getridofi_1}
			\end{aligned}
		\end{equation}
		where $p_a^{(i)}$ is the fraction of time that  node $i$ is active in the limit of $K\to\infty$,
		\begin{align}\label{eq: pai}
			p_a^{(i)}=\lim_{n\rightarrow\infty}\frac{\sum_{\ell=1}^{n}U_\ell^{(i)}}{\sum_{\ell=1}^{n}I_\ell^{(i)}}.
		\end{align}
		Furthermore,
\begin{align*}
&\mathbb{E}[p_a^{(i)}]=\mathbb{E}[\lim_{n\rightarrow\infty}\frac{\sum_{\ell=1}^{n}U_\ell^{(i)}}{\sum_{\ell=1}^{n}I_\ell^{(i)}}]=\mathbb{E}[\lim_{n\rightarrow\infty}\frac{\sum_{\ell=1}^{n}U_\ell^{(i)}/n}{\sum_{\ell=1}^{n}I_\ell^{(i)}/n}]\\&\overset{(b)}{=}\frac{\mathbb{E}[\lim_{n\rightarrow\infty}\frac{\sum_{\ell=1}^{n}U_\ell^{(i)}}{n}]}{\mathbb{E}[I_\ell^{(i)}]}\overset{(c)}{=}\frac{1}{\mathbb{E}[I_\beta]}\lim_{n\rightarrow\infty}\mathbb{E}[\frac{\sum_{\ell=1}^{n}U_\ell^{(i)}}{n}]=\frac{\mathbb{E}[U_\ell^{(i)}]}{\mathbb{E}[I_\ell^{(i)}]}.
\end{align*}
		(b) holds due to $\lim_{n\to\infty}\frac{\sum_{\ell=1}^{n}I_\ell^{(i)}}{n}=\mathbb{E}[I_\ell^{(i)}]$ in Appendix~\ref{App: proof of I LLN}. (c) holds by the dominated convergence theorem because $U_\ell^{(i)}$ is measurable. Therefore,
		\begin{align*}
			\alpha_\beta= \mathbb{E}\left[\frac{1}{M}\sum_{i=1}^M p_a^{(i)}\right]=\frac{\mathbb{E}[U_\ell^{(i)}]}{\mathbb{E}[I_\ell^{(i)}]}.
		\end{align*}

		\section{Proof of Lemma~\ref{lem EU}}\label{App: proof of EU EI}
		
		{\bf (1)} Note that the channel throughput is $c(M)$. Define $n_i$ as the total delivered number of packets delivered from node $i$ up to and including time slot $K$. Note that the transmission policy is stationary, so $n_i\to\infty$ implies $K\to\infty$. By Appendix~\ref{App: proof of I LLN}, the Law of Large Number holds for $\{I_\ell^{(i)}\}$, so the throughput is
		\begin{align*}
			\sum_{i=1}^{M}\lim_{n_i\rightarrow\infty}\frac{n_i}{\sum_{\ell=1}^{n_i}I_\ell^{(i)}}=\frac{M}{\mathbb{E}[I_\beta]}=c(M),
		\end{align*}
		which implies 
		\begin{align*}
			\mathbb{E}[I_\beta]=\frac{M}{c(M)}.
		\end{align*}

		{\bf (2)} Using  Lemma~\ref{lem: alpha}, we then obtain $\mathbb{E}[U_\ell^{(i)}]=\frac{M}{c(M)}\alpha_\beta$, hence $\mathbb{E}[U_\beta]=\frac{M}{c(M)}\alpha_\beta$. 
		From \eqref{eq: alpha_solution}, noting that $\alpha_\beta\leq1$, we find
		\begin{align*}
			\alpha_\beta=\frac{1-\sqrt{1-4\lambda_m/M}}{2} = \frac{2\lambda_m/M}{1+\sqrt{1-4\lambda_m/M}}\leq\frac{2\lambda_m}{M}.
		\end{align*}
		Therefore, we have 
		\begin{align*}
			M\alpha_\beta\leq 2\lambda_m< 2e^{-1}.
		\end{align*}	
		From Assumption~\ref{assumption: fully channel capacity}, under an optimal $\beta$, if $M$ is sufficient large, then $c(M)\approx e^{-1}$, so there exists $\epsilon>0$, such that $c(M)>e^{-1}-\epsilon$. Therefore, from \eqref{eq: U}, we find
		\begin{align*}
			\mathbb{E}[U_\beta] = o(M).
		\end{align*}

		\section{Proof of Lemma~\ref{lem: I delta SSN}}\label{App: LLN for sequences}
		From \eqref{eq: def delta}, \eqref{eq: E-MMSE-1} can be written as
		\begin{align}\label{LDelta}
			L^{EbT}(M)=\lim_{K\rightarrow\infty}\mathbb{E}[\frac{1}{M^2K}\sum_{i=1}^{M}\sum_{\ell=1}^{n_i}\Delta_\ell^{(i)}]
		\end{align}
		where $n_i$ is the total number of packets delivered from source $i$ up to and including time slot $K$.
		From the proof of Appendix~\ref{App: proof of I LLN},  $\{I_\ell^{(i)}\}$ is measurable.  Then, from \eqref{eq: def delta}, $\Delta_\ell^{(i)}$ is measurable. 
		By the dominated convergence theorem, we can exchange the order of $\lim_{K\to\infty}$ and $\mathbb{E}$ in \eqref{LDelta}.
		
		Note that $\{K\rightarrow\infty\}$ is equivalent to $\{n_i\rightarrow\infty\}$ for all $i$.
		It follows that in the limit of large time horizon $K$ (equivalently, large $n_i$ for all $i$), we have
		\begin{align*}
			L^{EbT}(M)=&\mathbb{E}[\frac{1}{M^2}\sum_{i=1}^{M}\lim_{n_i\rightarrow\infty}\sum_{\ell=1}^{n_i}\frac{\Delta_\ell^{(i)}}{I_\ell^{(i)}}]=\frac{1}{M^2}\sum_{i=1}^{M}\mathbb{E}\Bigg[\lim_{n_i\rightarrow\infty}\frac{\sum_{\ell=1}^{n_i}\Delta^{(i)}_\ell/n_i}{\sum_{\ell=1}^{n_i}I_\ell^{(i)}/n_i}\Bigg]\\
			=&\frac{1}{M^2}\sum_{i=1}^{M}\frac{1}{\mathbb{E}[I_\ell^{(i)}]}\lim_{n_i\rightarrow\infty}\mathbb{E}\Bigg[\frac{\sum_{\ell=1}^{n_i}\Delta^{(i)}_\ell}{n_i}\Bigg]=\frac{1}{M}\frac{\mathbb{E}[\Delta^{(i)}_\ell]}{\mathbb{E}[I_\ell^{(i)}]}.
		\end{align*}
		The last equality holds because $\Delta_\ell^{(i)}$ is identical over $\ell$. Recall that $\Delta_\beta$ and $I_\beta$ have the same distribution as $\Delta_\ell^{(i)}$ and $I_\ell^{(i)}$, respectively. Therefore,
		\begin{align*}
			L^{EbT}(M)=\frac{1}{M}\frac{\mathbb{E}[\Delta_\beta]}{\mathbb{E}[I_\beta]}.
		\end{align*}

		\section{Proof of \eqref{eq: Lpi222}}\label{App: proof of L2}
		For any $J_\beta+1\leq j\leq J_\beta+U_\beta-1$, $W_j$ is independent of $S_{J_\beta}$, hence $J_{\beta}$. Therefore, 
		\begin{align*}
	\mathbb{E}[S_j^2]=\mathbb{E}[(S_{J_\beta}+W_{J_{\beta}+1}+\cdots+W_j)^2]=\mathbb{E}[S_{J_\beta}^2]+\mathbb{E}[j-J_{\beta}]\sigma^2.
		\end{align*}
		Note that given $\beta$, $J_{\beta}$ and $U_{\beta}$ are independent, This helps to further simplify the numerator of $L_2^{EbT}(M)$ in \eqref{eq: Lpi222},
		\begin{align*}
			&\mathbb{E}\big[\sum_{j=J_\beta+1}^{J_\beta+U_\beta-1}S_j^2\big]=\mathbb{E}_{U_{\beta}}\Big\{\mathbb{E}\big[\sum_{j=J_\beta+1}^{J_\beta+U_\beta-1}(S_{J_\beta}+W_{J_{\beta}+1}+\cdots+W_j)^2|U_{\beta}\big]\Big\}\\
			=&\mathbb{E}_{U_{\beta}}\Big\{(U_{\beta}-1)\mathbb{E}[S_{J_{\beta}}^2]+\mathbb{E}[\sum_{j=J_{\beta}+1}^{J_{\beta}+U_{\beta}-1}(j-J_{\beta})]\sigma^2|U_\beta\Big\}=\mathbb{E}_{U_{\beta}}\Big\{(U_{\beta}-1)\mathbb{E}[S_{J_{\beta}}^2]+\frac{U_{\beta}(U_{\beta}-1)}{2}\sigma^2|U_\beta\Big\}\\
			=&(\mathbb{E}[U_{\beta}]-1)\mathbb{E}[S_{J_{\beta}}^2]+\mathbb{E}\big[\frac{U_{\beta}(U_{\beta}-1)}{2}\big]\sigma^2.
		\end{align*}
		Substituting \eqref{eq: EJ} into the equation above,
		\begin{align*}
			L_2^{EbT}(M)=\frac{1}{M}\cdot\frac{2\mathbb{E}[J_\beta](\mathbb{E}[U_\beta]-1)+\mathbb{E}[U_\beta^2]-\mathbb{E}[U_\beta]}{2\mathbb{E}[I_\beta]}\sigma^2.
		\end{align*}

		\section{Proof of Lemma~\ref{lem: brown motion}}\label{App: Proof of brown motion}
		The proof of the first part is the same as that of Theorem~7.5.5 and Theorem~7.5.9 in  \cite[Chapter~7]{Durrett}. 
		Here, we prove the second part. Using  \cite[Lemma~4]{YSun-2017}, we have
		\begin{align*}
			\mathbb{E}[\int_{0}^{J}B_t^2dt] = \frac{1}{6}\mathbb{E}[B_J^4].
		\end{align*}
		From the definition of $J$, $B_J^4=a^4$, then $\mathbb{E}[B_J^4]=a^4$, hence
		\begin{align*}
			\mathbb{E}[\int_{0}^{J}B_t^2dt]=\frac{1}{6}a^4.
		\end{align*}
		From Theorem~7.5.9 in  \cite[Chapter~7]{Durrett}, $\mathbb{E}[J^2]=\frac{5}{3}a^4$, so $\mathbb{E}[\int_{0}^{J}B_t^2dt] = \frac{1}{10}\mathbb{E}[J^2]$.

		\section{Proof of Theorem~\ref{thm: optimalbeta}}\label{App: Proof of optimal beta}
		We start with the expression of $\hat{L}^{EbT}(M)$ in \eqref{eq: L1 2}. Using   \eqref{eq: IlLlUl}, \eqref{eq: L1 2} can be re-written as
		\begin{align*}
	&\frac{1}{\sigma^2}\hat{L}^{EbT}(M)= \frac{\frac{1}{5}\mathbb{E}[J_\beta^2] + \mathbb{E}[U_\beta^2]}{2M\mathbb{E}[I_\beta]} =\frac{\frac{1}{5}\mathbb{E}[J_\beta^2] + \mathbb{E}[(I_\beta-J_\beta-1)^2]}{2M\mathbb{E}[I_\beta]}\\
			&=\frac{\frac{1}{5}\mathbb{E}[J_\beta^2] + \mathbb{E}[I_\beta^2]+\mathbb{E}[J_\beta^2] +1 - 2\mathbb{E}[I_\beta] +2\mathbb{E}[J_\beta]- 2\mathbb{E}[I_\beta J_\beta]}{2M\mathbb{E}[I_\beta]}
		\end{align*}
		Now replace for $I_\beta$ in $\mathbb{E}[I_\beta J_\beta]$ using \eqref{eq: IlLlUl}. Consider $M$  sufficiently large, and note that  $J_\beta\leq I_\beta$. We can approximately write the equation above as follows
		\begin{align}
			\hat{L}^{EbT}(M)\approx&\frac{\frac{1}{5}\mathbb{E}[J_\beta^2] + \mathbb{E}[I_\beta^2] - 2\mathbb{E}[(U_\beta+1) J_\beta] -\mathbb{E}[J_\beta^2]}{2M\mathbb{E}[I_\beta]}\sigma^2\nonumber
			\stackrel{(a)}{=}\frac{\frac{1}{5}\mathbb{E}[J_\beta^2] + \mathbb{E}[I_\beta^2] - 2\mathbb{E}[U_\beta+1]\mathbb{E}[ J_\beta] -\mathbb{E}[J_\beta^2]}{2M\mathbb{E}[I_\beta]}\sigma^2\nonumber\\
			\stackrel{(b)}{\approx}&\frac{\frac{1}{5}\mathbb{E}[J_\beta^2] + \mathbb{E}[I_\beta^2] -\mathbb{E}[J_\beta^2]}{2M\mathbb{E}[I_\beta]}\sigma^2
			\stackrel{(c)}{=}\frac{\frac{1}{5}\mathbb{E}[J_\beta^2] + \big(\mathbb{E}[I_\beta]\big)^2 + Var(U_\beta) -  \big(\mathbb{E}[J_\beta]\big)^2}{2M\mathbb{E}[I_\beta]}\sigma^2\label{eq:lbtapprox}
		\end{align}
		where $(a)$ holds because $U_\beta$ and $J_\beta$ are independent  given $\beta$, $(b)$ holds because $\mathbb{E}[U_\beta] = o(M)$ (see \eqref{eq: U}) and $J_\beta<I_\beta$, and $(c)$ holds by \eqref{eq: IlLlUl} and the independence of $U_\beta$ and $J_\beta$ which leads to $Var(U_\beta) + Var(J_\beta) = Var(I_\beta)$.
		
		Substituting \eqref{eq: I_beta} and \eqref{eq: J} into \eqref{eq:lbtapprox}, we obtain
		\begin{align}\label{eq: LbetaU}
			\hat{L}^{EbT}(M)\approx \frac{\frac{M^2}{c(M)^2} - \frac{2\beta^4}{3\sigma^4}+Var(U_\beta)}{2\frac{M^2}{c(M)}}\sigma^2.
		\end{align}
		
		Note that $J_\beta$, as defined before, is a stopping time of the discretization of the considered Wiener process $B(t)$, and therefore $J_\beta> J$, almost everywhere.
		We thus conclude that
		\begin{align}\label{eq: lower bound J}
			\mathbb{E}[J_\beta]>\mathbb{E}[J].
		\end{align}
		Delay per transmission is $1$ time slot, so $U_\beta\geq1$.  
		Using  \eqref{eq: I_beta} and \eqref{eq: lower bound J}, we can write
		\begin{align}\label{eq:beta}
			0\leq \mathbb{E}[U_\beta]-1 =\frac{M}{c(M)}-\mathbb{E}[J_\beta]< \frac{M}{c(M)}-\mathbb{E}[J].
		\end{align}
		Substituting $ \mathbb{E}[J]=\frac{\beta^2}{\sigma^2}$ (see \eqref{eq: J}) into \eqref{eq:beta}, we find
		\begin{align*}
			\beta\leq\sigma\sqrt{\frac{M}{c(M)}}.
		\end{align*}
		It is now easy to see that $\hat{\beta}=\sigma\sqrt{\frac{M}{c(M)}}$ is the minimum point of $\frac{- \frac{2\beta^4}{3\sigma^4}}{2\frac{M^2}{c(M)}}$.
		We will next show that the term $\frac{Var(U_{\beta^*})}{2\frac{M^2}{c(M)}}$ (in \eqref{eq: LbetaU}) is negligible and therefore $\beta^*\approx\hat{\beta}$ is approximately optimal. This will lead to $\hat{L}^{EbT} = \frac{1}{6c(M)}\sigma^2$.

		Recall from \cite[Theorem~3.1 and Section~IV]{JT1987} (see also Appendix~\ref{App: proof of I LLN}) that $S(k)  = \big(N(k),\hat{N}(k)\big)$ (depending on $\beta^*$) is geometrically ergodic.  Let $$S(k)\in\{(e_1,s_1), (e_2,s_2), \cdots, (e_m,s_m)\},$$ and $m$ be finite. Define the state space of $S(k)$ as $\mathcal{P}$, $|\mathcal{P}|=m$. Let $p_{ij}(k)$ be the transition probability matrix in time slot $k$, $i, j\in\mathcal{P}$. Let $\pi_i$ be the stationary distribution of $S(k)$. For $i=1,2,\cdots,m$, the transmitting probability $\nu_i$ is obtained by \eqref{eq: slotted aloha}, and the corresponding probability of a successful delivery of {\it each} active node, denoted by $r_i$, is 
		\begin{align*}
			r_i = \left\{
			\begin{aligned}
				&\nu_i(1-\nu_i)^{e_i-1}&\quad& e_i\geq 1\\
				&0&\quad&e_i = 0
			\end{aligned}
			\right.
		\end{align*}
		for $i=1,2,\cdots,m$. 
		Denote by $\Pr(i,k)$  the probability that the system is in state $i$ in time slot $k$. Let $H(j, i, k)\in\{0,1\}$ be an indicator. $H(j, i, k)=1$ represents that node $j$ becomes newly active in time slot $k$ when the system is in state $i$. Consider any {\it node} $j$. We have
		\begin{align*}
			\Pr(U_{\beta^*}=1) =&\lim_{k\to\infty}\sum_{i=1}^{m}\Pr(i,k)\mathbb{E}[H(j, i,k)]r_i
		\end{align*}
		Note that the system is stationary, and $\lim_{k\to\infty}\mathbb{E}[H(j, i,k)]$ exists for all $j, i$.
		Since all nodes are identical, when the system is stationary, we have 
		\begin{align*}
			\lim_{k\to\infty}\mathbb{E}[H(1, i,k)] = \cdots = \lim_{k\to\infty}\mathbb{E}[H(M, i,k)] \triangleq \tilde{h}_i.
		\end{align*}
		In addition, when the system is stationary, 
		\begin{align*}
			\lim_{k\to\infty}\Pr(i,k) = \pi_i.
		\end{align*}
		Denote the dominant term of $\Pr(U_{\beta^*}=1)$ as $\Delta_1$, thus
		\begin{align*}
			\Delta_1 = \sum_{i=1}^m\pi_i \tilde{h}_i r_i.
		\end{align*}
		Similarly, 
		\begin{align*}
			\Pr(U_{\beta^*}=2)=&\lim_{k\to\infty}\sum_{i=1}^{m}\Pr(i, k)\mathbb{E}[H(j, i, k)] (1-r_i)\times\sum_{l=1}^{m}p_{il}(k+1)r_l.
		\end{align*}
		Since $S(k)$ is geometrically ergodic, we have $|p_{ij}(k) - \pi_j|\leq C\rho^k$, where $C<\infty$, $0<\rho<1$.
		Thus, the dominant term of $\Pr(U_{\beta^*}=2)$ when  $k\to\infty$, denoted by $\Delta_2$, is
		\begin{align*}
			\Delta_2=\sum_{i=1}^m\pi_i\tilde{h}_i(1-r_i)\sum_{j=1}^m\pi_j r_j = (y-\Delta_1)\mu
		\end{align*}
		where $y=\sum_{i=1}^m\pi_i\tilde{h}_i$ and $\mu = \sum_{j=1}^m\pi_j r_j$. 
		By a similar process, denote the dominant term of $\Pr(U_{\beta^*}=l)$ as $\Delta_l$:
		\begin{align*}
			\Delta_l = (\mu-\Delta_1)(1-\mu)^{l-2}\mu,\quad l\geq3.
		\end{align*}
		So the dominant term of $\mathbb{E}[U_{\beta^*}]$, denoted by $\Lambda_1$, is
		\begin{align*}
			\Lambda_1 = \Delta_1+(y-\Delta_1)\mu\sum_{l=2}^{\infty}(1-\mu)^{l-2}l=\Delta_1 + 2(y-\Delta_1) + \frac{1-\mu}{\mu}(y-\Delta_1)=y+(y-\Delta_1)\frac{1}{\mu}.
		\end{align*}
		By Lemma~\ref{lem EU}, $\mathbb{E}[U_{\beta^*}] = o(M)$, so $\Lambda_1 = o(M)$. Note that $\frac{1}{\mu} = \frac{\Lambda_1-y}{y-\Delta_1}$, then $\frac{1}{\mu} = o(M)$ since $y$ and $\Delta_1$ are scalars.
		
		The dominant term of $\mathbb{E}[U_{\beta^*}^2]$, denoted by $\Lambda_2$, is similarly
		\begin{align*}
			\Lambda_2 =& \Delta_1+(y-\Delta_1)\mu\sum_{l=2}^{\infty}(1-\mu)^{l-2}l^2= \Delta_1+(y-\Delta_1)\mu\sum_{l=2}^{\infty}(1-\mu)^{l-2}\big((l-2)^2 + 4(l-2) -4\big)\\
			=&\Delta_1 + (y-\Delta_1)\big(\frac{(1-\mu)(2-\mu)}{\mu^2}+4\frac{1-\mu}{\mu} - 1\big).
		\end{align*}
		Note that $\frac{1}{\mu} = o(M)$, thus $\Lambda_2 = o(M^2)$,  $\mathbb{E}[U_{\beta^*}^2] = o(M^2)$, and
		\begin{align*}
			Var(U_{\beta^*}) = o(M^2)
		\end{align*}
		which implies
		\begin{align*}
			\frac{Var(U_{\beta^*})}{M^2}\approx0.
		\end{align*}
		
		So $\beta^*\approx\hat{\beta}=\sigma\sqrt{\frac{M}{c(M)}}$ is approximately optimal and $\hat{L}^{EbT} \approx \frac{1}{6c(M)}\sigma^2$. From Assumption~\ref{assumption: fully channel capacity}, when $M$ is sufficiently large, $c(M)\approx e^{-1}$, then $\hat{L}^{EbT}\approx \frac{e}{6}\sigma^2$, and the corresponding $\beta^* \approx\hat{\beta}= \sigma\sqrt{eM}$.

		\section{Assumptions~\ref{assumption: independent}, ~\ref{assumption: fully channel capacity}  are (approximately) satisfied}\label{App: assumptions verify}
		We first verify Assumption~\ref{assumption: fully channel capacity}. From the proof of Theorem~\ref{thm: optimalbeta}, when $M$ is sufficiently large,  $\mathbb{E}[J_{\beta^*}] \approx \mathbb{E}[I_{\beta^*}]$, and according to \eqref{eq: IlLlUl}, thus $\mathbb{E}[U_{\beta^*}]\approx 1$.  Substituting $\mathbb{E}[U_{\beta^*}]\approx 1$ into  \eqref{eq: alpha_solution}, $\lambda_m \approx (1-\alpha_{\beta^*})c(M)<c(M)<1/e$, which implies the system is stabilized. 
		In addition, $\mathbb{E}[J_{\beta^*}]\approx eM$, which implies on average (approximately) every $eM$ slots a source becomes newly active. 
		Note that $\lambda_m=\lim_{k\to\infty}\lambda(k) \approx\frac{1}{eM}\times M=e^{-1}$. 
		From \eqref{eq: alpha_solution},
		\begin{align*}
			(1-\alpha_{\beta^*})M\alpha_{\beta^*} \approx e^{-1}.
		\end{align*}
		Solving the equation for $\alpha_{\beta^*}\leq1$ when $M$ is sufficiently large, we find
		\begin{align*}
			\alpha_{\beta^*}\approx \frac{1}{eM}.
		\end{align*}
		
		Recall that $N(k)$ is the number of active nodes in the end of time slot $k$. Denote $D(k)$ as the number of nodes that become inactive in time slot $k$. Then, $\lim_{k\to\infty}\mathbb{E}[D(k)] = c(M)$. Thus,
		\begin{equation}\label{eq: verify N(k)M}
			\begin{aligned}
				N(k+1) = \min\{N(k) + \lambda(k+1), M\} - D(k+1).
			\end{aligned}
		\end{equation}
		From \eqref{eq: limit of alpha(k)} and Lemma~\ref{lem: alpha}, $\lim_{k\to\infty}\mathbb{E}[N(k)] = \alpha_{\beta^*} M$. Under $\beta^*$, when $M$ is sufficiently large, from \eqref{eq: U}, $\lim_{k\to\infty}\mathbb{E}[N(k)] = \mathbb{E}[U_{\beta^*}]c(M) \approx c(M) <1$. Therefore, from \eqref{eq: verify N(k)M},
		\begin{align}\label{eq: verify N(k)M1}
			\mathbb{E}[N(k+1)] = \mathbb{E}[N(k)] + \mathbb{E}[\lambda(k+1)] - \mathbb{E}[D(k+1)].
		\end{align}
		Letting $k\to\infty$, \eqref{eq: verify N(k)M1} is reduced to
		\begin{align*}
			c(M) = \lambda_m.
		\end{align*}
		For sufficiently large $M$, $c(M)\approx e^{-1}$ and
		Assumption~\ref{assumption: fully channel capacity} is (approximately) satisfied.

		Next, we verify Assumption~\ref{assumption: independent}. 
		Denote $q(n,L)$ as the probability that $n$ of $L$ inactive nodes becomes newly active in one slot. Since the system is stationary,  $q(n,L)$ will not change over time and only depends on the error process profile. Recall that $a(k)$ is number of newly active nodes in time $k$. We need to show $\{a(k)=l_1\}$ and $\{a(k+1)=l_2\}$ are independent where $l_1, l_2$ are non-negative integers. In fact,
		\begin{align*}
\Pr\{a(k)=l_1, a(k+1)=l_2\} = \Pr\{a(k)=l_1\}\Pr\{a(k)=l_2|a(k)=l_1\}
		\end{align*}
		Note that $\sum_{i=1}^{M}d_i(k)=1$ represents that a packet is delivered in time slot $k$. Then,
		\begin{align*}
			&\Pr\{a(k)=l_2|a(k)=l_1\}=\Pr\{a(k)=l_2|a(k)=l_1, \sum_{i=1}^{M}d_i(k)=0\}\Pr\{\sum_{i=1}^{M}d_i(k)=0\}\\
			&+\Pr\{a(k)=l_2|a(k)=l_1, \sum_{i=1}^{M}d_i(k)=1\}\Pr\{\sum_{i=1}^{M}d_i(k)=1\}
		\end{align*}
		Note that $c(M)\approx e^{-1}$, then $\Pr\{\sum_{i=1}^{M}d_i(k)=1\}\approx e^{-1}$, thus
		\begin{align*}
	\Pr\{a(k)=l_2|a(k)=l_1\}\approx q(l_2,M-l_1-1)1/e+q(l_2,M-l_1)(1-1/e).
		\end{align*}
		Recall that $\mathbb{E}[a(k)]=\lambda_m$ when $k\to\infty$, i.e., the system is stationary. Note that $a(k)$ is non-negative, so by Markov's Inequality, we have
		\begin{align*}
			\Pr\{a(k)\geq O(M)\}\leq\frac{\mathbb{E}[a(k)]}{O(M)}.
		\end{align*}
		Let $k, M\to\infty$, we have $\Pr\{a(k)\geq O(M)\}\to0$, which implies $a(k)=o(M)$ with probability $1$ when $M$ is sufficiently large.  $q(l_2,M-l_1-1)\approx q(l_2,M-l_1)\approx q(l_2, M)\approx \Pr(a(k+1)=l_2)$, thus
		\begin{align*}
\Pr\{a(k)=l_1, a(k+1)=l_2\}\approx \Pr\{a(k)=l_1\}\Pr\{a(k)=l_2\}
		\end{align*}
		when $M$ is sufficiently large. Assumption~\ref{assumption: independent} is thus (approximately) satisfied.

	\end{appendices}

\end{document}